\newtheorem{thm}{Theorem}
\newtheorem{defin}{Definition}
\newtheorem{lem}{Lemma}
\newtheorem{assum}{Assumption}
\newtheorem{rem}{Remark}
\newtheorem{con}{Condition}
	\providecommand\BibTeX{{%
			\normalfont B\kern-0.5em{\scshape i\kern-0.25em b}\kern-0.8em\TeX}}}
\journal{~}
\begin{document}
\begin{frontmatter}
\title{Estimating mixed memberships in multi-layer networks}
\author[label1]{Huan Qing\corref{cor1}}
\ead{qinghuan@u.nus.edu;qinghuan@cqut.edu.cn;qinghuan07131995@163.com}
\cortext[cor1]{Corresponding author.}
\address[label1]{School of Economics and Finance, Lab of Financial Risk Intelligent Early Warning and Modern Governance,  Chongqing University of Technology, Chongqing, 400054, China}
\begin{abstract}
Community detection in multi-layer networks has emerged as a crucial area of modern network analysis. However, conventional approaches often assume that nodes belong exclusively to a single community, which fails to capture the complex structure of real-world networks where nodes may belong to multiple communities simultaneously. To address this limitation, we propose novel spectral methods to estimate the common mixed memberships in the multi-layer mixed membership stochastic block model. The proposed methods leverage the eigen-decomposition of three aggregate matrices: the sum of adjacency matrices, the debiased sum of squared adjacency matrices, and the sum of squared adjacency matrices. We establish rigorous theoretical guarantees for the consistency of our methods. Specifically, we derive per-node error rates under mild conditions on network sparsity, demonstrating their consistency as the number of nodes and/or layers increases under the multi-layer mixed membership stochastic block model. Our theoretical results reveal that the method leveraging the sum of adjacency matrices generally performs poorer than the other two methods for mixed membership estimation in multi-layer networks. We conduct extensive numerical experiments to empirically validate our theoretical findings. For real-world multi-layer networks with unknown community information, we introduce two novel modularity metrics to quantify the quality of mixed membership community detection. Finally, we demonstrate the practical applications of our algorithms and modularity metrics by applying them to real-world multi-layer networks, demonstrating their effectiveness in extracting meaningful community structures.
\end{abstract}
\begin{keyword}
Multi-layer networks\sep community detection \sep spectral methods \sep mixed membership\sep modularity
\end{keyword}
\end{frontmatter}
\section{Introduction}\label{sec1}
Multi-layer networks have emerged as a powerful tool for describing complex real-world systems. Unlike single-layer networks, these structures consist of multiple layers, where nodes represent entities and edges represent their interactions \citep{mucha2010community,kivela2014multilayer,boccaletti2014structure}. In this paper, we focus on multi-layer networks with the same nodes set in all layers and nodes solely interacting within their respective layers, excluding any cross-layer connections. Such networks are ubiquitous, spanning from social networks to transportation systems, biological networks, and trade networks. For instance, in social networks, individuals tend to form connections within different platforms (e.g., Facebook, Twitter, WeChat, LinkedIn), while cross-platform connections are typically not allowed. In transportation networks, layers might represent different modes of transportation (roads, railways, airways), with nodes corresponding to locations and edges indicating the availability of a specific mode between two locations \citep{boccaletti2014structure}. The absence of cross-layer connections indicates the inability to directly switch modes without intermediate stops. Biological networks also exhibit rich multi-layer structures. In gene regulatory networks, layers could represent different types of gene interactions, with nodes representing genes and edges depicting their specific interactions \citep{narayanan2010simultaneous,bakken2016comprehensive,zhang2017finding}. The absence of cross-layer connections underscores the specialized nature of interactions within each layer and their vital role in governing cellular functions. The FAO-trade multi-layer network gathers trade relationships between countries across various products sourced from the Food and Agriculture Organization of the United Nations \citep{de2015structural}. Figure \ref{FAOTRADE3N} illustrates the networks of the first three products within this dataset.
\begin{figure}
\centering
\resizebox{\columnwidth}{!}{
\subfigure[]{\includegraphics[width=0.2\textwidth]{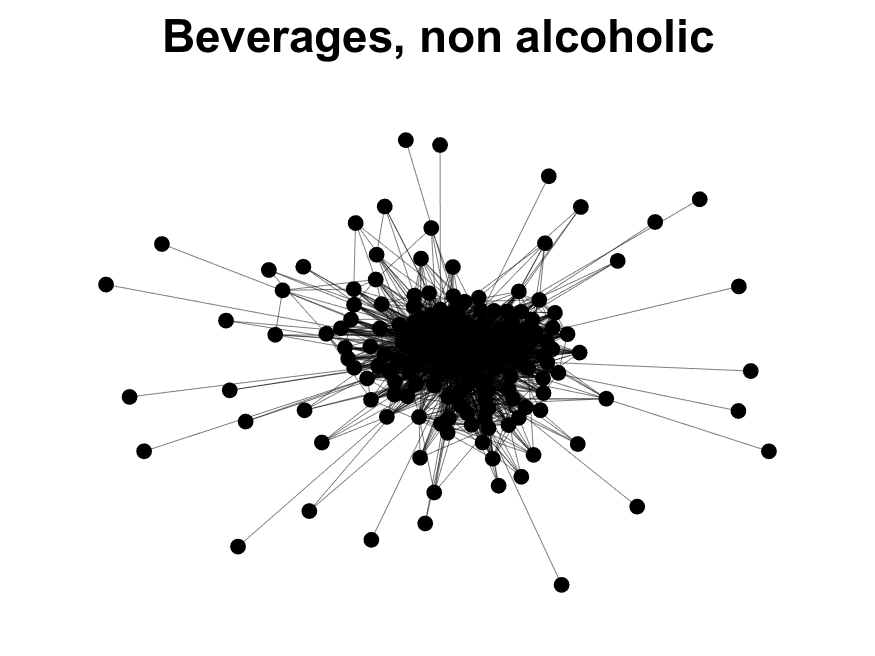}}
\subfigure[]{\includegraphics[width=0.2\textwidth]{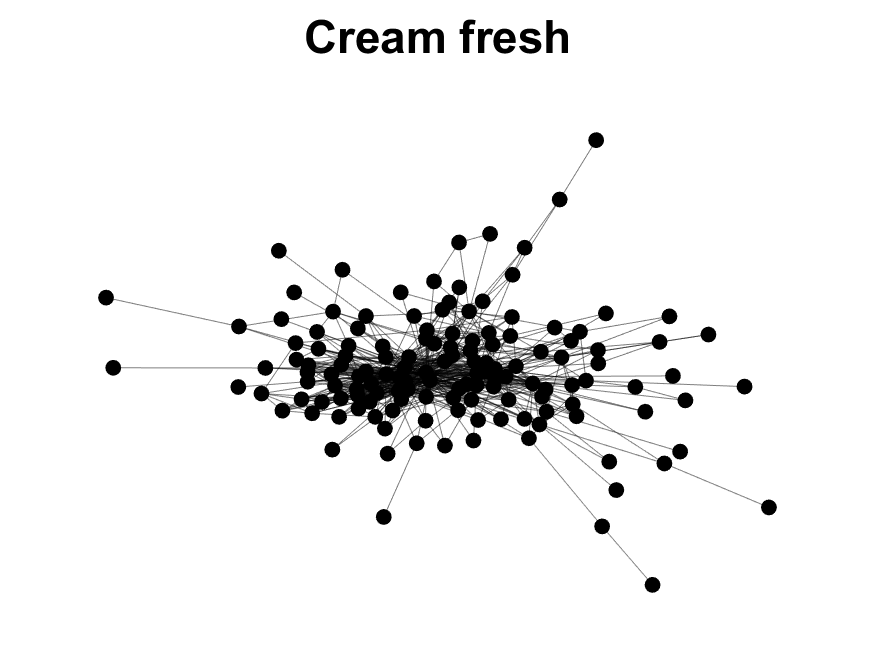}}
\subfigure[]{\includegraphics[width=0.2\textwidth]{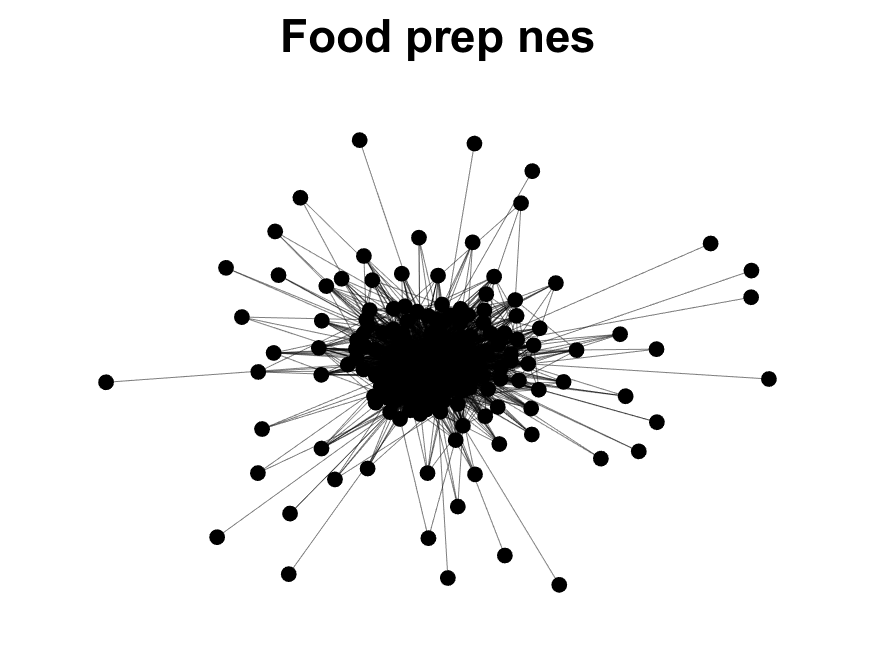}}
}
\caption{Networks of the first three products of the FAO-trade multi-layer network.}
\label{FAOTRADE3N} 
\end{figure}

Community detection in multi-layer networks is a crucial analytical tool, revealing latent structures within networks \citep{kim2015community, huang2021survey}. A community (also known as a group, cluster, or block) typically comprises nodes more densely interconnected than those outside \citep{newman2003structure,newman2004finding,fortunato2010community,fortunato2016community,javed2018community}. In social networks, individuals often form distinct communities based on interests, occupations, or locations. For instance, on Facebook, users might belong to hobby-based communities like photography or hiking, identifiable by dense interaction patterns among members. In transportation networks, communities emerge due to geographical proximity or functional similarity. For instance, cities might form communities linked by tightly integrated railways. In practical applications, a node can simultaneously belong to multiple communities. For example, in social networks, an individual may be a member of several different social groups. Similarly, in transportation networks, a specific location can function as a hub for multiple modes of transportation, bridging diverse communities. Likewise, in biological networks, a gene can belong to several overlapping communities, participating in a wide range of processes.

In the last few years, community detection in multi-layer networks, where each node belongs exclusively to a single community, has attracted considerable attention. For instance, several studies have been proposed under the multi-layer stochastic block model (MLSBM), a model that assumes the network for each layer can be modeled by the well-known stochastic block model (SBM) \citep{holland1983stochastic}. MLSBM also assumes that the community information is common to all layers. Under the framework within MLSBM, \citep{han2015consistent} studied the consistent community detection for maximum likelihood estimate and a spectral method designed by running the K-means algorithm on a few eigenvectors of the sum of adjacency matrices when only the number of layers grows. \citep{paul2020spectral} studied the consistency of co-regularized spectral clustering, orthogonal linked matrix factorization, and the spectral method in \citep{han2015consistent} as both the number of nodes and the number of layers grow within the MLSBM context. Moreover, \citep{lei2020consistent} established consistency results for a least squares estimation of community memberships within the MLSBM framework. Recently, \citep{lei2023bias} studied the consistency of a bias-adjusted (i.e., debiased) spectral clustering method using a novel debiased sum of squared adjacency matrices under MLSBM. Their numerical experiments demonstrated that this approach significantly outperforms the spectral method proposed by \citep{han2015consistent}. Also see \citep{pensky2019spectral,jing2021community,chen2022community,su2023spectral} for other recent works exploring variants of MLSBM.

However, the MLSBM model has one significant limitation, being tailored exclusively for multi-layer networks with non-overlapping communities. The mixed membership stochastic block (MMSB) model \citep{MMSB} is a popular statistical model for capturing mixed community memberships in single-layer networks. Under MMSB, some methods have been developed to estimate mixed memberships for single-layer networks generated from MMSB, including variational expectation-maximization \citep{MMSB, Gopalan2013eff}, nonnegative matrix factorization \citep{psorakis2011overlapping,wang2011community}, tensor decomposition \citep{anandkumar2014tensor}, and spectral clustering \citep{mao2021estimating,jin2024mixed}. In this paper, we consider the problem of estimating nodes' common community memberships in multi-layer networks generated from the multi-layer mixed membership stochastic block (MLMMSB) model, a multi-layer version of MMSB. The main contributions of this paper are summarized as follows:
\begin{itemize}
  \item We introduce three spectral methods for estimating mixed memberships in multi-layer networks generated from MLMMSB. These methods employ vertex-hunting algorithms on a few selected eigenvectors of three aggregate matrices: the sum of adjacency matrices, the debiased sum of squared adjacency matrices, and the sum of squared adjacency matrices.
  \item We establish per-node error rates for these three methods under mild conditions on network sparsity, demonstrating their consistent mixed membership estimation as the number of nodes and/or layers increases within the MLMMSB framework. Our theoretical analysis reveals that the method utilizing the debiased sum of squared adjacency matrices consistently outperforms the method using the sum of squared adjacency matrices in terms of error rate. Additionally, both methods generally exhibit lower error rates than the method based on the sum of adjacency matrices. This underscores the advantage of debiased spectral clustering in mixed membership community detection for multi-layer networks. To the best of our knowledge, this is the first work to estimate mixed memberships using the above three aggregate matrices and the first work to establish consistency results within the MLMMSB framework.
  \item To assess the quality of mixed membership community detection in multi-layer networks, we introduce two novel modularity metrics: fuzzy sum modularity and fuzzy mean modularity. The first is derived from computing the fuzzy modularity of the sum of adjacency matrices, while the second is obtained by averaging the fuzzy modularity of adjacency matrices across individual layers. To  the best of our knowledge, our two modularity metrics are the first to measure the quality of mixed membership community detection in multi-layer networks.
  \item We conduct extensive simulations to validate our theoretical findings and demonstrate the practical effectiveness of our methods and metrics through real-world multi-layer network applications.
\end{itemize}

The remainder of this paper is structured as follows: Section \ref{sec2} introduces the model, followed by Section \ref{sec3} outlining the methods developed. Section \ref{sec4} presents the theoretical results. Subsequently, Section \ref{sec5} includes experimental results on computer-generated multi-layer networks, while Section \ref{sec6realdata} focuses on real-world multi-layer networks. Lastly, Section \ref{sec7} concludes the paper and technical proofs are provided in \ref{SecProofs}.

\emph{Notation.} We employ the notation $[m]$ to denote the set $\{1,2,\ldots,m\}$. Furthermore, $I_{m\times m}$ stands for the $m$-by-$m$ identity matrix. For a vector $x$, $\|x\|_{q}$ represents its $l_{q}$ norm. When considering a matrix $M$ and an index set $s$ that is a subset of $[m]$, $M(s,:)$ refers to the sub-matrix of $M$ comprising the rows indexed by $s$. Additionally, $M'$ denotes the transpose of $M$, $\|M\|_{F}$ is its Frobenius norm, $\|M\|_{\infty}$ represents the maximum absolute row sum, $\|M\|_{2\rightarrow\infty}$ signifies the maximum row-wise $l_{2}$ norm, $\mathrm{rank}(M)$ gives its rank, and $\lambda_{k}(M)$ stands for the $k$-th largest eigenvalue of $M$ in magnitude. The notation $\mathbb{E}[\cdot]$ is used to denote expectation, while $\mathbb{P}(\cdot)$ represents probability. Finally, $e_{i}$ is defined as the indicator vector with a 1 in the $i$-th position and zeros elsewhere.
\section{Multi-layer mixed membership stochastic block model}\label{sec2}
Throughout, we consider undirected and unweighted multi-layer networks with $n$ common nodes and $L$ layers. For the $l$-th layer, let the $n\times n$ symmetric matrix $A_{l}$ be its adjacency matrix such that $A_{l}(i,j)=A_{l}(j,i)=1$ if there is an edge connecting node $i$ and node $j$ and $A_{l}(i,j)=A_{l}(j,i)=0$ otherwise, for all $i\in[n], j\in[n], l\in[L]$, i.e., $A_{l}=A'_{l}\in\{0,1\}^{n\times n}$. Additionally, we allow for the possibility of self-edges (loops) in this paper. We assume that the multi-layer network consists of $K$ common communities
\begin{align}\label{defineC}
\mathcal{C}_{1}, \mathcal{C}_{2}, \ldots, \mathcal{C}_{K}.
\end{align}

Throughout, we assume that the number of communities $K$ is known in this paper. The challenging task of theoretically estimating $K$ in multi-layer networks exceeds the scope of this paper. Let $\Pi\in[0,1]^{n\times K}$ be the membership matrix such that $\Pi(i,k)$ represents the ``weight'' that node $i$ belongs to the $k$-th community $\mathcal{C}_{k}$ for $i\in[n], k\in[K]$. Suppose that $\Pi$ satisfies the following condition
\begin{align}\label{definePi}
\mathrm{rank}(\Pi)=K, 0\leq\Pi(i,k)\leq1, \|\Pi(i,:)\|_{1}=1 \mathrm{~for~} i\in[n],k\in[K].
\end{align}

We call node $i$ a ``pure'' node if one entry of $\Pi(i,:)$ is 1 while the other $(K-1)$ elements are zeros and a ``mixed'' node otherwise. Assume that
\begin{align}\label{pureNodes}
\mathrm{The~}k\mathrm{-th~community~}\mathcal{C}_{k}~\mathrm{has~at~least~one~pure~node~for~} k\in[K].
\end{align}

Based on Equation (\ref{pureNodes}), let $\mathcal{I}$ be the index set of pure nodes such that $\mathcal{I}=\{p_{1},p_{2},\ldots,p_{K}\}$ with $p_{k}\in\{1,2,\ldots,n\}$ being an arbitrary pure node in the $k$-th community $\mathcal{C}_{k}$ for $k\in[K]$. Similar to \citep{mao2021estimating}, without loss of generality, we reorder the nodes such that $\Pi(\mathcal{I},:)=I_{K\times K}$.

Suppose that all the layers share a common mixed membership matrix $\Pi$ but with possibly different edge probabilities. In particular, we work with a multi-layer version of the popular mixed membership stochastic block model (MMSB) \citep{MMSB}. To be precise, we define the multi-layer MMSB below:
\begin{defin}
Suppose that Equations (\ref{defineC})-(\ref{pureNodes}) are satisfied, the multi-layer mixed membership stochastic block model (MLMMSB) for generating a multi-layer network with adjacency matrices $\{A_{l}\}^{L}_{l=1}$ is as follows:
\begin{align}\label{MLMMSB}
\Omega_{l}:=\rho\Pi B_{l}\Pi'~~~~A_{l}(i,j)=A_{l}(j,i)\sim \mathrm{Bernoulli}(\Omega_{l}(i,j))\mathrm{~for~}i\in[n],j\in[n], l\in[L],
\end{align}
where $B_{l}=B'_{l}\in[0,1]^{K\times K}$ for $l\in[L]$ and $\rho\in(0,1]$.
\end{defin}

Since $B_{l}$ can vary for different $l$, $A_{l}$ generated by Equation (\ref{MLMMSB}) may have different expectation $\Omega_{l}\equiv\mathbb{E}[A_{l}]$ for $l\in[L]$. Notably, when $L=1$, the MLMMSB model degenerates to the popular MMSB model. When all nodes are pure, MLMMSB reduces to the MLSBM studied in previous works \citep{han2015consistent,paul2020spectral,lei2020consistent,lei2023bias}. Define $\mathcal{B}=\{B_{1}, B_{2},\ldots,B_{L}\}$. Equation (\ref{MLMMSB}) says that MLMMSB is parameterized by $\Pi, \rho$, and $\mathcal{B}$. For brevity, we denote MLMMSB defined by Equation (\ref{MLMMSB}) as $\mathrm{MLMMSB}(\Pi,\rho,\mathcal{B})$. Since $\mathbb{P}(A_{l}(i,j)=1)=\Omega_{l}(i,j)=\rho\Pi(i,:)B_{l}\Pi'(j,:)\leq\rho$, we see that decreasing the value of $\rho$ results in a sparser multi-layer network, i.e., $\rho$ controls the overall sparsity of the multi-layer network. For this reason, we call $\rho$ the sparsity parameter in this paper. We allow the sparsity parameter $\rho$ to go to zero by either increasing the number of nodes $n$, the number of layers $L$, or both simultaneously. We will study the impact of $\rho$ on the performance of the proposed methods by considering $\rho$ in our theoretical analysis. The primal goal of community detection within the MLMMSB framework is to accurately estimate the common mixed membership matrix $\Pi$ from the observed $L$ adjacency matrices $\{A_{l}\}^{L}_{l=1}$. This estimation task is crucial for understanding the underlying community structure in multi-layer networks.
\section{Spectral methods for mixed membership estimation}\label{sec3}
In this section, we propose three spectral methods designed to estimate the mixed membership matrix $\Pi$ within the MLMMSB model for multi-layer networks in which nodes may belong to multiple communities with different weights. We recall that $\Omega_{l}$ is the expectation of the $l$-th observed adjacency matrix $A_{l}$ for $l\in[L]$ and $\Pi$ contains the common community memberships for all nodes. To provide intuitions of the designs of our methods, we consider the oracle case where $\{\Omega_{l}\}^{L}_{l=1}$ are directly observed. First, we define two distinct aggregate matrices formed by $\{\Omega_{l}\}^{L}_{l=1}$: $\Omega_{\mathrm{sum}}\equiv\sum_{l\in[L]}\Omega_{l}$ and $\tilde{S}_{\mathrm{sum}}\equiv\sum_{l\in[L]}\Omega^{2}_{l}$, where the former represents the sum of all expectation adjacency matrices, the latter is the sum of all squared expectation adjacency matrices, and both matrices contain the information about the mixed memberships of nodes in the multi-layer network since $\Omega_{\mathrm{sum}}=\rho\Pi(\sum_{l\in[L]}B_{l})\Pi'$ and $\tilde{S}_{\mathrm{sum}}=\rho^{2}\Pi(\sum_{l\in[L]}B_{l}\Pi'\Pi B_{l})\Pi'$. Since $\mathrm{rank}(\Pi)=K$, it is easy to see that $\mathrm{rank}(\Omega_{\mathrm{sum}})=K$ if $\mathrm{rank}(\sum_{l\in[L]}B_{l})=K$ and $\mathrm{rank}(\tilde{S}_{\mathrm{sum}})=K$ if $\mathrm{rank}(\sum_{l\in[L]}B^{2}_{l})=K$. Assuming that the number of communities $K$ is much smaller than the number of nodes $n$, we observe that $\Omega_{\mathrm{sum}}$ and $\tilde{S}_{\mathrm{sum}}$ possess low-dimensional structure. This low-rank property is crucial for the development of our spectral methods, as it allows us to efficiently extract meaningful information from the high-dimensional data. Building on these insights, we present the following lemma that characterizes the geometries of the two aggregate matrices $\Omega_{\mathrm{sum}}$ and $\tilde{S}_{\mathrm{sum}}$. This lemma forms the theoretical foundation for our methods, enabling us to develop accurate and computationally efficient algorithms for estimating the mixed membership matrix $\Pi$ within the MLMMSB model.
\begin{lem}\label{IS}
(Ideal Simplexes) Under the model $\mathrm{MLMMSB}(\Pi,\rho,\mathcal{B})$, depending on the conditions imposed on the set $\{B_{l}\}^{L}_{l=1}$, we arrive at the following conclusions:
\begin{enumerate}
  \item When $\mathrm{rank}(\sum_{l\in[L]}B_{l})=K$: Let $U\Sigma U'$ denote the top $K$ eigen-decomposition of $\Omega_{\mathrm{sum}}$ where the $n\times K$ matrix $U$ satisfies $U'U=I_{K\times K}$ and the $k$-th diagonal entry of the $K\times K$ diagonal matrix $\Sigma$ represents the $k$-th largest eigenvalue (in magnitude) of $\Omega_{\mathrm{sum}}$ for $k\in[K]$. Then, we have $U=\Pi U(\mathcal{I},:)$.

  \item When $\mathrm{rank}(\sum_{l\in[L]}B^{2}_{l})=K$: Let $V\Lambda V'$ represent the top $K$ eigen-decomposition of $\tilde{S}_{\mathrm{sum}}$ where the $n\times K$ matrix $V$ fulfills $V'V=I_{K\times K}$ and the $k$-th diagonal entry of the $K\times K$ diagonal matrix $\Lambda$ is the $k$-th largest eigenvalue (in magnitude) of $\tilde{S}_{\mathrm{sum}}$ for $k\in[K]$. In this case, we have $V=\Pi V(\mathcal{I},:)$.
\end{enumerate}
\end{lem}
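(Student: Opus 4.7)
The plan is to prove both parts by a single linear-algebraic argument. The key observation is that each of the two aggregate matrices $\Omega_{\mathrm{sum}}$ and $\tilde{S}_{\mathrm{sum}}$ admits a factorization of the form $\Pi M \Pi'$ for some symmetric $K\times K$ matrix $M$, and whenever $M$ has rank $K$ the column space of $\Pi M \Pi'$ coincides with the column space of $\Pi$. Combined with the normalization $\Pi(\mathcal{I},:)=I_{K\times K}$ built into the model, this pins down $U$ and $V$ as promised.

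For part 1, I would start from the identity $\Omega_{\mathrm{sum}}=\rho\Pi(\sum_{l\in[L]}B_{l})\Pi'$ stated just before the lemma. Setting $M:=\rho\sum_{l}B_{l}$ and using $\mathrm{rank}(M)=K$ together with $\mathrm{rank}(\Pi)=K$, one obtains $\mathrm{rank}(\Omega_{\mathrm{sum}})=K$, and a one-line inclusion-plus-dimension-count argument shows that the column space of $\Omega_{\mathrm{sum}}$ equals that of $\Pi$. Since the columns of $U$ are orthonormal eigenvectors spanning the range of $\Omega_{\mathrm{sum}}$, there exists a (necessarily $K\times K$ and invertible) matrix $X$ with $U=\Pi X$. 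Evaluating at the pure-node rows and using $\Pi(\mathcal{I},:)=I_{K\times K}$ gives $U(\mathcal{I},:)=X$, and substituting back yields $U=\Pi U(\mathcal{I},:)$.

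For part 2, the same template applies once I rewrite $\tilde{S}_{\mathrm{sum}}=\sum_{l}\Omega_{l}^{2}=\rho^{2}\Pi\bigl(\sum_{l}B_{l}\Pi'\Pi B_{l}\bigr)\Pi'$ as $\Pi N \Pi'$ with $N:=\rho^{2}\sum_{l}B_{l}\Pi'\Pi B_{l}$. The remaining content is then identical: show $\mathrm{rank}(\tilde{S}_{\mathrm{sum}})=K$, conclude $\mathrm{col}(V)=\mathrm{col}(\Pi)$, write $V=\Pi Y$, and read off $Y=V(\mathcal{I},:)$ from the pure-node rows.

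The only genuine obstacle is bridging the stated rank hypothesis $\mathrm{rank}(\sum_{l}B_{l}^{2})=K$ to the rank of $N$ that the argument actually needs. I would handle this by noting that $\Pi'\Pi$ is symmetric positive definite (since $\Pi$ has full column rank), so it factors as $C'C$ for some invertible $K\times K$ matrix $C$; symmetry of $B_{l}$ then gives $\sum_{l}B_{l}\Pi'\Pi B_{l}=\sum_{l}(CB_{l})'(CB_{l})$, whose kernel equals $\bigcap_{l}\ker(B_{l})$. The hypothesis $\mathrm{rank}(\sum_{l}B_{l}^{2})=\mathrm{rank}(\sum_{l}B_{l}'B_{l})=K$ forces exactly this same intersection to be trivial, so $N$ has rank $K$ and part 2 goes through.
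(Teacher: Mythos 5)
Your proof is correct and follows essentially the same route as the paper's: both reduce to writing $U=\Pi X$ for an invertible $K\times K$ matrix $X$ and then reading off $X=U(\mathcal{I},:)$ from the pure-node rows via $\Pi(\mathcal{I},:)=I_{K\times K}$; the paper obtains $U=\Pi X$ by multiplying the eigen-equation on the right by $U\Sigma^{-1}$ (so $X=(\rho\sum_{l}B_{l})\Pi'U\Sigma^{-1}$), whereas you obtain it from the column-space identity $\mathrm{col}(U)=\mathrm{col}(\Omega_{\mathrm{sum}})=\mathrm{col}(\Pi)$, which amounts to the same thing. The one place you go beyond the paper is in part 2: the paper's proof ends with ``Similarly, we have $V=\Pi V(\mathcal{I},:)$'' and never checks that the stated hypothesis $\mathrm{rank}(\sum_{l}B_{l}^{2})=K$ actually makes $\Lambda$ invertible, i.e.\ that $\mathrm{rank}(\sum_{l}B_{l}\Pi'\Pi B_{l})=K$; your kernel argument ($\ker(\sum_{l}B_{l}\Pi'\Pi B_{l})=\bigcap_{l}\ker(B_{l})=\ker(\sum_{l}B_{l}^{2})$, using the positive semidefiniteness of each summand and the symmetry of each $B_{l}$) supplies exactly this missing bridge and is a worthwhile addition.
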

Recall that $\Pi$ satisfies Equations (\ref{definePi}) and (\ref{pureNodes}), by $U=\Pi U(\mathcal{I},:)$  according to Lemma \ref{IS}, the rows of $U$ forms a $K$-simplex in $\mathbb{R}^{K}$, referred to as the Ideal Simplex of $U$ ($IS_{U}$ for brevity). Notably, the $K$ rows of $U(\mathcal{I},:)$ serve as its vertices. Given that $\Pi$ satisfies Equations (\ref{definePi}) and (\ref{pureNodes}), and $U=\Pi U(\mathcal{I},:)$, it follows that $U(i,:)=\Pi(i,:)U(\mathcal{I},:)$ for $i\in[n]$. This implies that $U(i,:)$ is a convex linear combination of the $K$ vertices of $IS_{U}$ with weights determined by $\Pi(i,:)$. Consequently, a pure row of $U$ (call $U(i,:)$ pure row if node $i$ is a pure node and mixed row otherwise) lies on one of the $K$ vertices of the simplex, while a mixed row occupies an interior position within the simplex. Such simplex structure is also observed in mixed membership estimation for community detection in single-layer networks \citep{mao2021estimating,jin2024mixed,qing2024bipartite} and in topic matrix estimation for topic modeling \citep{ke2024using}. Notably, the mixed membership matrix $\Pi$ can be precisely recovered by setting $\Pi=U U^{-1}(\mathcal{I},:)$ provided that the corner matrix $U(\mathcal{I},:)$ is obtainable. Thanks to the simplex structure in $U=\Pi U(\mathcal{I},:)$, the vertices of the simplex can be found by a vertex-hunting technique such as the successive projection (SP) algorithm \citep{araujo2001successive,gillis2013fast,gillis2015semidefinite}. Applying SP to all rows of $U$ with $K$ clusters enables the exact recovery of the $K\times K$ corner matrix $U(\mathcal{I},:)$. Details of the SP algorithm are provided in Algorithm 1 \citep{gillis2015semidefinite}. SP can efficiently find the $K$ distinct pure rows of $U$. Similarly, $V=\Pi V(\mathcal{I},:)$ also exhibits a simplex structure, leading to the recovery of $\Pi$ via $\Pi=V V^{-1}(\mathcal{I},:)$ by Lemma \ref{IS}. Consequently, applying the SP algorithm to $V$ with $K$ clusters also yields an exact reconstruction of the mixed membership matrix $\Pi$.
\begin{rem}
Let $\tilde{\mathcal{I}}$ represent another index set, consisting of arbitrary pure nodes $\tilde{p}_{1},\tilde{p}_{2},\ldots,\tilde{p}_{K}$ from respective communities $\mathcal{C}_{k}$ for $k\in[K]$. According to the proof of Lemma \ref{IS}, we have $U = \Pi U(\tilde{\mathcal{I}},:)$. Furthermore, this implies that $U = \Pi U(\tilde{\mathcal{I}},:) = \Pi U(\mathcal{I},:)$, which in turn signifies that $U(\tilde{\mathcal{I}},:)$ is equivalent to $U(\mathcal{I},:)$. In essence, the corner matrix $U(\mathcal{I},:)$ remains unchanged regardless of the specific pure nodes chosen to form the index set. An analogous argument holds for $V(\mathcal{I},:)$.
\end{rem}

For real-world multi-layer networks, the $L$ adjacency matrices $A_{1}, A_{2}, \ldots, A_{L}$ are observed instead of their expectations. For our first method, set $A_{\mathrm{sum}}=\sum_{l\in[L]}A_{l}$. We have $\mathbb{E}[A_{\mathrm{sum}}]=\Omega_{\mathrm{sum}}$ under MLMMSB. Subsequently, let $\hat{U}\hat{\Sigma}\hat{U}'$ be the top $K$ eigen-decomposition of $A_{\mathrm{sum}}$ where $\hat{U}$ is an orthogonal matrix with $\hat{U}'\hat{U}=I_{K\times K}$ and $\hat{\Sigma}$ is a diagonal matrix with its $k$-th diagonal element representing the $k$-th largest eigenvalue of $A_{\mathrm{sum}}$ in magnitude for $k\in[K]$. Given that the expectation of $A_{\mathrm{sum}}$ is $\Omega_{\mathrm{sum}}$, it follows that $\hat{U}$ can be interpreted as a slightly perturbed version of $U$. To proceed, we apply the SP algorithm to all rows of $\hat{U}$ with $K$ clusters, resulting in an estimated index set of pure nodes, denoted as $\hat{\mathcal{I}}$. We infer that  $\hat{U}(\hat{\mathcal{I}},:)$ should closely approximate the corner matrix $U(\mathcal{I},:)$. Finally, we estimate the mixed membership matrix $\Pi$ by computing $\hat{U}\hat{U}^{-1}(\hat{\mathcal{I}},:)$. Our first algorithm, called ``Successive projection on the sum of adjacency matrices'' (SPSum, Algorithm \ref{alg:SPSum}) summarises the above analysis.  It efficiently estimates the mixed membership matrix $\Pi$ by executing the SP algorithm on the top $K$ eigenvectors of  $A_{\mathrm{sum}}$.
\begin{algorithm}
\caption{SPSum}
\label{alg:SPSum}
\begin{algorithmic}[1]
\Require Adjacency matrices $A_{1}, A_{2}, \ldots, A_{L}$, and number of communities $K$.
\Ensure Estimated mixed membership matrix $\hat{\Pi}$.
\State Compute $A_{\mathrm{sum}}=\sum_{l\in[L]}A_{l}$.
\State Get $\hat{U}\hat{\Sigma}\hat{U}'$, the top $K$ eigen-decomposition of $A_{\mathrm{sum}}$.
\State Apply the SP algorithm on all rows of $\hat{U}$ with $K$ clusters to obtain the estimated index set $\hat{\mathcal{I}}$.
\State Set $\hat{\Pi}=\mathrm{max}(0,\hat{U}\hat{U}^{-1}(\hat{\mathcal{I}},:))$.
\State Normalize each row of $\hat{\Pi}$ by its $l_{1}$ norm: $\hat{\Pi}(i,:)=\frac{\hat{\Pi}(i,:)}{\|\hat{\Pi}(i,:)\|_{1}}$ for $i\in[n]$.
\end{algorithmic}
\end{algorithm}

In our second methodological approach, we define the aggregate matrix $S_{\mathrm{sum}}$ as the sum of the squared adjacency matrices, adjusted for bias, specifically as $S_{\mathrm{sum}}=\sum_{l\in[L]}(A_{l}^{2}-D_{l})$. Here, $D_{l}$ represents a diagonal matrix, with its $i$-th diagonal entry being the degree of node $i$ in layer $l$, i.e., $D_{l}(i,i)=\sum_{j\in[n]}A_{l}(i,j)$ for $i \in [n]$ and $l \in [L]$. The matrix $S_{\mathrm{sum}}$ was first introduced in \citep{lei2023bias} within the context of MLSBM, serving as a debiased estimate of the sum of squared adjacency matrices. This debiasing is necessary as $\sum_{l\in[L]} A_{l}^{2}$ alone  provides a biased approximation of $\sum_{l\in[L]}\Omega_{l}^{2}$. By subtracting the diagonal matrix $D_{l}$ from each squared adjacency matrix, we can effectively remove this bias, rendering $S_{\mathrm{sum}}$ a reliable estimator of $\tilde{S}_{\mathrm{sum}}$ as demonstrated in \citep{lei2023bias}. Subsequently, let $\hat{V}\hat{\Lambda}\hat{V}'$ be the top $K$ eigen-decomposition of $S_{\mathrm{sum}}$. Given that $\hat{V}$ provides a close approximation of $V$, applying the SP algorithm to all rows of $\hat{V}$ yields a reliable estimate of the corner matrix $V(\mathcal{I},:)$. In summary, our second spectral method, centered on $S_{\mathrm{sum}}$, is outlined in Algorithm \ref{alg:SPDSoS}. We refer to this method as ``Successive projection on the debiased sum of squared adjacency matrices" (SPDSoS for brevity).

The third method utilizes the summation of squared adjacency matrices, expressed as $\sum_{l\in[L]}A^{2}_{l}$, as a substitute for $S_{\mathrm{sum}}$ in Algorithm \ref{alg:SPDSoS}. Notably, this approach excludes the crucial bias-removal step inherent in $S_{\mathrm{sum}}$. We call this method ``Successive projection on the sum of squared adjacency matrices", abbreviated as SPSoS. In the subsequent section, we will demonstrate that SPDSoS consistently outperforms SPSoS, and both methods are generally theoretically superior to SPSum.
\begin{algorithm}
\caption{SPDSoS}
\label{alg:SPDSoS}
\begin{algorithmic}[1]
\Require Adjacency matrices $A_{1}, A_{2}, \ldots, A_{L}$, and number of communities $K$.
\Ensure Estimated mixed membership matrix $\hat{\Pi}$.
\State Compute $S_{\mathrm{sum}}=\sum_{l\in[L]}(A^{2}_{l}-D_{l})$.
\State Get $\hat{V}\hat{\Lambda}\hat{V}'$, the top $K$ eigen-decomposition of $S_{\mathrm{sum}}$.
\State Apple the SP algorithm on all rows of $\hat{V}$ with $K$ clusters to obtain the estimated index set $\hat{\mathcal{I}}$.
\State Set $\hat{\Pi}=\mathrm{max}(0,\hat{U}\hat{U}^{-1}(\hat{\mathcal{I}},:))$.
\State Update $\hat{\Pi}$ by $\hat{\Pi}(i,:)=\frac{\hat{\Pi}(i,:)}{\|\hat{\Pi}(i,:)\|_{1}}$ for $i\in[n]$.
\end{algorithmic}
\end{algorithm}
\section{Main results}\label{sec4}
In this section, we demonstrate the consistency of our methods by presenting their theoretical upper bounds for per-node error rates as the number of nodes $n$ and/or the number of layers $L$ increases within the context of the MLMMSB model. Assumption \ref{Assum1} provides a prerequisite lower bound for the sparsity parameter $\rho$ to ensure the theoretical validity of SPSum.
\begin{assum}\label{Assum1}
(sparsity requirement for SPSum) $\rho nL\geq\tau^{2}\mathrm{log}(n+L)$, where $\tau=\mathrm{max}_{i\in[n],j\in[n]}|\sum_{l\in[L]}(A_{l}(i,j)-\Omega_{l}(i,j))|$.
\end{assum}
In Assumption \ref{Assum1}, the choice of $\tau$ plays a crucial role in determining the level of sparsity regime. Since $\sum_{l\in[L]}A_{l}(i,j)\leq L$, $\tau$ has an upper bound $L$. However, to consider an even sparser scenario, we introduce a constant positive value $\beta$ that is strictly less than $L$ and remains fixed regardless of the growth of $n$ or $L$. This ensures that the aggregate number of edges connecting any two nodes $i$ and $j$ across all layers remains below $\beta$ even in the limit as $n$ or $L$ approaches infinity. Under these conditions, Assumption \ref{Assum1} is revised to reflect a more stringent requirement: $\rho\geq\frac{\beta^{2}\mathrm{log}(n+L)}{nL}$. This revised assumption characterizes the most challenging scenario for community detection. To establish the theoretical bounds for SPSum, we need the following requirement on $\mathcal{B}$.
\begin{assum}\label{Assum11}
(SPSum's requirement on $\mathcal{B}$) $|\lambda_{K}(\sum_{l\in[L]}B_{l})|\geq c_{1}L$ for some constant $c_{1}>0$.
\end{assum}
Assumption \ref{Assum11} is mild as $|\lambda_{K}(\sum_{l\in[L]}B_{l})|$ represents the smallest singular value resulting from the summation of $L$ matrices, and it is reasonable to presume that $|\lambda_{K}(\sum_{l\in[L]}B_{l})|$ is on the order of $O(L)$. To simplify our theoretical analysis, we also introduce the following condition:
\begin{con}\label{condition}
$K=O(1)$ and $\lambda_{K}(\Pi'\Pi)=O(\frac{n}{K})$.
\end{con}
Condition \ref{condition} is mild as $K=O(1)$ implies that the number of communities remains constant, while $\lambda_{K}(\Pi'\Pi)=O(\frac{n}{K})$ ensures the balance of community sizes, where the size of the $k$-th community $\mathcal{C}_{k}$ is defined as $\sum_{i\in[n]}\Pi(i,k)$ for $k\in[K]$. Our main result for SPSum offers an upper bound
on its per-node error rate in terms of $\rho, n$, and $L$ under the MLMMSB framework.
\begin{thm}\label{mainSPSum}
(Per-node error rate of SPSum) Under $\mathrm{MLMMSB}(\Pi,\rho,\mathcal{B})$, when Assumption \ref{Assum1}, Assumption \ref{Assum11}, and Condition \ref{condition} hold, let $\hat{\Pi}$ be obtained from the SPSum algorithm, there exists a $K\times K$ permutation matrix $\mathcal{P}$ such that with probability at least $1-o(\frac{1}{n+L})$, we have
\begin{align*}
\mathrm{max}_{i\in[n]}\|e'_{i}(\hat{\Pi}-\Pi\mathcal{P})\|_{1}=O(\sqrt{\frac{\mathrm{log}(n+L)}{\rho nL}}).
\end{align*}
\end{thm}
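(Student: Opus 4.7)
The plan is to derive the per-node $\ell_1$ error by tracing the propagation of noise from the adjacency matrix to the final membership estimate through the spectral pipeline. First I would establish a concentration bound $\|A_{\mathrm{sum}} - \Omega_{\mathrm{sum}}\| = O(\sqrt{\rho nL \log(n+L)})$ via matrix Bernstein. Each entry of $A_{\mathrm{sum}} - \Omega_{\mathrm{sum}}$ is a sum of $L$ centered Bernoullis with variance at most $\rho L$, so the matrix variance parameter is at most $\rho n L$, and the almost-sure deviation is controlled by $\tau$. Assumption \ref{Assum1} is precisely the condition that ensures the Bernstein variance term dominates the $\tau \log(n+L)$ term, producing the advertised spectral bound.

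Second, I would lower-bound $|\lambda_K(\Omega_{\mathrm{sum}})|$. Since $\Omega_{\mathrm{sum}} = \rho \Pi (\sum_l B_l) \Pi'$, the nonzero eigenvalues coincide with those of $\rho (\Pi'\Pi)^{1/2} (\sum_l B_l) (\Pi'\Pi)^{1/2}$, which by a Weyl-type inequality gives $|\lambda_K(\Omega_{\mathrm{sum}})| \gtrsim \rho \cdot |\lambda_K(\sum_l B_l)| \cdot \lambda_K(\Pi'\Pi) \gtrsim \rho nL$, using Assumption \ref{Assum11} and Condition \ref{condition}. Combined with Davis--Kahan (or the $\sin\Theta$ theorem in Frobenius norm), this yields the existence of a $K\times K$ orthogonal matrix $O$ with $\|\hat{U}O - U\|_F = O(\sqrt{\log(n+L)/(\rho nL)})$. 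Because $K = O(1)$, the row-wise bound $\|\hat{U}O - U\|_{2\to\infty}$ is controlled by the same order up to constants after a standard leave-one-out or entrywise eigenvector argument; alternatively, combining the Frobenius bound with balance of $U$'s rows inherited from $\lambda_K(\Pi'\Pi) = \Omega(n/K)$ gives a per-row bound of the correct order.

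Third, I would invoke the perturbation theory for the successive projection algorithm of Gillis and Vavasis. Since the rows of $U = \Pi U(\mathcal{I},:)$ form an ideal simplex with vertex matrix $U(\mathcal{I},:)$, and $\hat{U}O$ is an order-$\sqrt{\log(n+L)/(\rho nL)}$ row-wise perturbation of $U$, the SP output $\hat{\mathcal{I}}$ satisfies, after relabeling by some permutation $\mathcal{P}$, $\|\hat{U}(\hat{\mathcal{I}},:) - O'U(\mathcal{I},:)\mathcal{P}\|_{2\to\infty} = O(\sqrt{\log(n+L)/(\rho nL)})$, provided the minimum singular value of $U(\mathcal{I},:)$ is bounded below. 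The latter fact follows from $U(\mathcal{I},:)'U(\mathcal{I},:) = (\Pi'\Pi)^{-1}$ restricted to the corner rows, and Condition \ref{condition} gives $\sigma_K(U(\mathcal{I},:)) = \Omega(\sqrt{K/n})$, which is needed to keep the inverse $U^{-1}(\mathcal{I},:)$ well-conditioned.

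Finally, I would combine these bounds through the identity $\hat{\Pi} - \Pi\mathcal{P} = \hat{U}\hat{U}^{-1}(\hat{\mathcal{I}},:) - \Pi\mathcal{P}$ (before the $\max(0,\cdot)$ and $\ell_1$-normalization, both of which only contract error). Writing the perturbation as $(\hat{U}O - U)O'U^{-1}(\mathcal{I},:)\mathcal{P} + U U^{-1}(\mathcal{I},:)\mathcal{P}$ vs.\ $\hat{U}\hat{U}^{-1}(\hat{\mathcal{I}},:)$, each term contributes at most the desired rate after multiplication by $\|U^{-1}(\mathcal{I},:)\|_{\infty} = O(\sqrt{n/K})$ and division by similar quantities introduced by the row norms. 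The per-row $\ell_1$ norm then inherits a factor $\sqrt{K} = O(1)$ over the row-wise $\ell_2$ norm, giving the claimed $O(\sqrt{\log(n+L)/(\rho nL)})$ bound. The main obstacle will be the row-wise eigenvector perturbation bound in step two: the naive bound from $\|\hat{U}O - U\|_F$ distributed uniformly over $n$ rows is too pessimistic, and a careful leave-one-out or $\|\cdot\|_{2\to\infty}$-type argument is needed to show that the per-row error does not inflate by factors of $\sqrt{n}$, which would otherwise destroy the rate after multiplying by $\|U^{-1}(\mathcal{I},:)\|$.
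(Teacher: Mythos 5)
Your outline follows the same architecture as the paper's proof---Bernstein concentration of the noise, a lower bound $|\lambda_{K}(\Omega_{\mathrm{sum}})|\gtrsim\rho nL$ from Assumption \ref{Assum11} and Condition \ref{condition}, robustness of the successive projection step, and a final conversion to a per-row $\ell_{1}$ bound via $\|U^{-1}(\mathcal{I},:)\|=O(\sqrt{n/K})$ (the paper delegates the last two steps to the proof of Theorem 3.2 of Mao et al.). However, the one step you explicitly leave open---upgrading the global eigenspace bound to a row-wise bound $\|\hat{U}O-U\|_{2\rightarrow\infty}$ without losing a $\sqrt{n}$ factor---is precisely the load-bearing step, and as written your proposal does not close it. Your first step also feeds the wrong norm into the argument: you concentrate the spectral norm $\|A_{\mathrm{sum}}-\Omega_{\mathrm{sum}}\|$, which pairs naturally with Davis--Kahan in Frobenius norm and then leaves you stuck at exactly the obstacle you name.

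The paper closes this gap in two different ways, and it is worth seeing both. The primary proof concentrates the $\ell_{\infty}$ operator norm (maximum absolute row sum) $\|A_{\mathrm{sum}}-\Omega_{\mathrm{sum}}\|_{\infty}=O(\sqrt{\rho nL\log(n+L)})$ by applying Bernstein to $\sum_{j}y_{(ij)}x(j)$ with $x\in\{-1,1\}^{n}$, and then invokes Theorem 4.2 of Cape, Tang and Priebe (2019), which bounds $\|\hat{U}-U\mathcal{O}\|_{2\rightarrow\infty}$ directly by $\|A_{\mathrm{sum}}-\Omega_{\mathrm{sum}}\|_{\infty}\|U\|_{2\rightarrow\infty}/|\lambda_{K}(\Omega_{\mathrm{sum}})|$; with $\|U\|_{2\rightarrow\infty}=O(1/\sqrt{n})$ this gives the row-wise rate with no leave-one-out argument at all, which is why Assumption \ref{Assum1} is phrased in terms of $\tau$ (the entrywise deviation controls the Bernstein range term for row sums). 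The alternative proof in the appendix does essentially what you propose: it verifies the incoherence $\mu=O(1)$ and the bounded-noise ratio $c_{b}=O(1)$ from Condition \ref{condition} and Assumption \ref{Assum1}, and then applies the leave-one-out $2\rightarrow\infty$ theorem of Chen, Chi, Fan and Ma (2021). So your route is viable, but to make it a proof you must either (i) switch your concentration step to the row-sum norm and cite a $2\rightarrow\infty$ perturbation theorem stated in that norm, or (ii) actually verify the incoherence and bounded-entry hypotheses required by the leave-one-out machinery rather than asserting that "a careful argument is needed."
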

According to Theorem \ref{mainSPSum}, it becomes evident that as $n$ (and/or $L$) approaches infinity, SPSum's error rate diminishes to zero, highlighting the advantage of employing multiple layers in community detection and SPSum's consistent community detection. Additionally, Theorem \ref{mainSPSum} says that to attain a sufficiently low error rate for SPSum, the sparsity parameter $\rho$ must decrease at a rate slower than $\frac{\mathrm{log}(n+L)}{nL}$ (i.e., $\rho$ must be significantly greater than $\frac{\mathrm{log}(n+L)}{nL}$). This finding aligns with the sparsity requirement stated in Assumption \ref{Assum1} for SPSum. It is noteworthy that the theoretical upper bound for SPSum's per-node error rate, given as $O(\sqrt{\frac{\mathrm{log}(n+L)}{\rho nL}})$ in Theorem \ref{mainSPSum}, is independent of how we select the value of $\tau$ in Assumption \ref{Assum1}. This underscores the generality of our theoretical findings.

The assumption \ref{Assum2} stated below establishes the lower bound of the sparsity parameter $\rho$ that ensures the consistency of both SPDSoS and SPSoS.
\begin{assum}\label{Assum2}
(sparsity requirement for SPDSoS and SPSoS) $\rho^{2}n^{2}L\geq\tilde{\tau}^{2}\mathrm{log}(n+L)$, where $\tilde{\tau}=\mathrm{max}_{i\in[n]}\mathrm{max}_{j\in[n]}\\|\sum_{l\in[L]}\sum_{m\in[n]}(A_{l}(i,m)A_{l}(m,j)-\Omega_{l}(i,m)\Omega_{l}(m,j))|$.
\end{assum}
Analogous to Assumption \ref{Assum1}, when considering an extremely sparse scenario where $\tilde{\tau}$ does not exceed a positive constant $\tilde{\beta}$, Assumption \ref{Assum2} dictates that $\rho$ must satisfy $\rho\geq\frac{1}{n}\sqrt{\frac{\tilde{\beta}^{2}\mathrm{log}(n+L)}{L}}$. This requirement aligns with the sparsity requirement in Theorem 1 of \citep{lei2023bias}. The subsequent assumption serves a similar purpose to Assumption \ref{Assum11}.
\begin{assum}\label{Assum22}
(SPDSoS's and SPSoS's requirement on $\mathcal{B}$) $|\lambda_{K}(\sum_{l\in[L]}B^{2}_{l})|\geq c_{2}L$ for some constant $c_{2}>0$.
\end{assum}
Theorems \ref{mainSPDSoS} and \ref{mainSPSoS} are the main results of SPDSoS and SPSoS, respectively.
\begin{thm}\label{mainSPDSoS}
(Per-node error rate of SPDSoS) Under $\mathrm{MLMMSB}(\Pi,\rho,\mathcal{B})$, when Assumption \ref{Assum2}, Assumption \ref{Assum22}, and Condition \ref{condition} hold, let $\hat{\Pi}$ be obtained from the SPDSoS algorithm, there exists a $K\times K$ permutational matrix $\tilde{\mathcal{P}}$ such that with probability at least $1-o(\frac{1}{n+L})$, we have
\begin{align*}
\mathrm{max}_{i\in[n]}\|e'_{i}(\hat{\Pi}-\Pi\tilde{\mathcal{P}})\|_{1}=O(\sqrt{\frac{\mathrm{log}(n+L)}{\rho^{2}n^{2}L}})+O(\frac{1}{n}).
\end{align*}
\end{thm}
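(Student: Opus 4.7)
The plan is to mirror the vertex-hunting pipeline used for SPSum, but now built around the debiased second-order aggregate $S_{\mathrm{sum}}$. By the second part of Lemma \ref{IS}, the population eigenmatrix satisfies $V = \Pi V(\mathcal{I},:)$, so it is enough to show three things in sequence: (i) $\hat{V}$ is row-wise close to $V$ up to an orthogonal rotation $\mathcal{O}$; (ii) the SP algorithm applied to the rows of $\hat{V}$ returns an index set $\hat{\mathcal{I}}$ for which $\hat{V}(\hat{\mathcal{I}},:)$ is close to $V(\mathcal{I},:)\mathcal{O}$ up to a $K\times K$ permutation $\tilde{\mathcal{P}}$; and (iii) the identity $\Pi = V V^{-1}(\mathcal{I},:)$ transfers these bounds into a row-wise $\ell_1$ error on $\hat{\Pi}$, after which the max-with-zero and $\ell_1$-normalization steps in Algorithm \ref{alg:SPDSoS} can only shrink the error.

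The key technical step, and the source of the additional $O(1/n)$ term, is controlling $\|S_{\mathrm{sum}} - \tilde{S}_{\mathrm{sum}}\|_{2}$. I would split it as
\begin{align*}
S_{\mathrm{sum}} - \tilde{S}_{\mathrm{sum}} = \bigl(S_{\mathrm{sum}} - \mathbb{E}[S_{\mathrm{sum}}]\bigr) + \bigl(\mathbb{E}[S_{\mathrm{sum}}] - \tilde{S}_{\mathrm{sum}}\bigr).
\end{align*}
For the stochastic part, Assumption \ref{Assum2} is precisely the assumption that makes entrywise concentration of the bilinear forms $\sum_{l}\sum_{m} A_{l}(i,m)A_{l}(m,j)$ tractable; combining a scalar Bernstein bound entrywise with the sparsity condition $\rho^{2}n^{2}L \geq \tilde{\tau}^{2}\log(n+L)$, and upgrading to operator norm by a matrix-Bernstein/Gershgorin argument in the spirit of \citep{lei2023bias}, should yield $\|S_{\mathrm{sum}} - \mathbb{E}[S_{\mathrm{sum}}]\|_{2} = O(\sqrt{\rho^{2}n^{2}L\log(n+L)})$ with probability $1-o(1/(n+L))$. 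For the bias part, a direct calculation using $A_{l}(i,m)^{2}=A_{l}(i,m)$ shows that $\mathbb{E}[A_{l}^{2} - D_{l}]$ agrees with $\Omega_{l}^{2}$ off the diagonal and is zero on the diagonal, so $\mathbb{E}[S_{\mathrm{sum}}] - \tilde{S}_{\mathrm{sum}}$ is a diagonal matrix with entries of magnitude at most $\sum_{l}\sum_{m}\Omega_{l}(i,m)^{2} = O(\rho^{2}nL)$. Dividing these two pieces by the signal strength $\lambda_{K}(\tilde{S}_{\mathrm{sum}}) \asymp \rho^{2}n^{2}L$, which follows from $\tilde{S}_{\mathrm{sum}} = \rho^{2}\Pi(\sum_{l}B_{l}\Pi'\Pi B_{l})\Pi'$ together with Assumption \ref{Assum22} and Condition \ref{condition}, produces exactly the two terms in the stated bound.

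With this operator-norm bound in hand, a Davis--Kahan/Wedin argument gives $\|\hat{V} - V\mathcal{O}\|_{F}$, and combining with the simplex-structure estimate $\|V\|_{2\rightarrow\infty} \lesssim 1/\sqrt{n}$ and a row-wise eigenvector perturbation lemma yields $\max_{i}\|(\hat{V}-V\mathcal{O})(i,:)\|_{2}$ of the same order. The Gillis--Vavasis stability guarantee for SP then transfers this into a bound on $\max_{k}\|\hat{V}(\hat{p}_{k},:) - V(p_{k},:)\mathcal{O}\tilde{\mathcal{P}}\|_{2}$. A first-order expansion of $\hat{V}\hat{V}^{-1}(\hat{\mathcal{I}},:) - V V^{-1}(\mathcal{I},:)\tilde{\mathcal{P}}$, using the bound $\|V^{-1}(\mathcal{I},:)\|_{2} = O(\sqrt{n})$ (a consequence of Condition \ref{condition} via $V(\mathcal{I},:)'\Pi'\Pi V(\mathcal{I},:) = I_{K\times K}$), finally converts the row-wise spectral perturbation into the stated row-wise $\ell_1$ bound on $\Pi$.

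The hard part is the operator-norm concentration of $S_{\mathrm{sum}}-\mathbb{E}[S_{\mathrm{sum}}]$: the summands $A_{l}^{2}-D_{l}$ are quadratic in the independent Bernoullis $A_{l}(i,j)$, so standard matrix Bernstein does not apply directly. I would linearize via a decoupling or conditioning step, invoke scalar Bernstein entrywise (where $\tilde{\tau}^{2}$ captures the variance proxy), and then combine with a Gershgorin/row-sum argument to pass to the operator norm while carefully absorbing the diagonal correction $D_{l}$. Once this concentration step is in place, the remaining steps are routine adaptations of the arguments used to prove Theorem \ref{mainSPSum}, with the single genuinely new feature being the $O(1/n)$ bias term that the debiasing cannot fully remove.
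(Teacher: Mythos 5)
Your proposal follows essentially the same route as the paper: the error $S_{\mathrm{sum}}-\tilde{S}_{\mathrm{sum}}$ is split into a mean-zero off-diagonal part controlled by Bernstein's inequality at scale $\sqrt{\rho^{2}n^{2}L\,\mathrm{log}(n+L)}$ and a deterministic diagonal bias of size $O(\rho^{2}nL)$ (the paper notes $(A_{l}^{2}-D_{l})(i,i)=0$, so the diagonal contributes exactly $\sum_{l}\sum_{m}\Omega_{l}^{2}(i,m)$), both divided by $|\lambda_{K}(\tilde{S}_{\mathrm{sum}})|\asymp\rho^{2}n^{2}L$, and then converted into the row-wise $\ell_{1}$ bound on $\hat{\Pi}$ via a $2\rightarrow\infty$ eigenvector perturbation theorem (the paper uses Theorem 4.2 of Cape et al., which takes $\|S_{\mathrm{sum}}-\tilde{S}_{\mathrm{sum}}\|_{\infty}$ as input, together with $\|V\|_{2\rightarrow\infty}=O(1/\sqrt{n})$) and the SP/simplex argument of Mao et al. The only cosmetic difference is that you phrase the concentration step in operator norm before passing back through a Gershgorin/row-sum argument, whereas the paper works with the maximum-row-sum norm throughout and never needs the operator-norm or Frobenius-norm intermediate.
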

\begin{thm}\label{mainSPSoS}
(Per-node error rate of SPSoS) Under the same conditions as in Theorem \ref{mainSPDSoS} and suppose $\rho nL\geq\mathrm{log}(n+L)$, let $\hat{\Pi}$ be obtained from the SPSoS algorithm, with probability at least $1-o(\frac{1}{n+L})$, we have
\begin{align*}
\mathrm{max}_{i\in[n]}\|e'_{i}(\hat{\Pi}-\Pi\tilde{\mathcal{P}})\|_{1}=O(\sqrt{\frac{\mathrm{log}(n+L)}{\rho^{2}n^{2}L}})+O(\frac{1}{\rho n}).
\end{align*}
\end{thm}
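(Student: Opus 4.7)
The plan is to leverage the proof of Theorem \ref{mainSPDSoS} and only quantify the additional damage caused by omitting the diagonal debiasing step. The algorithms SPSoS and SPDSoS differ solely in whether the top-$K$ eigendecomposition is performed on $M := \sum_{l\in[L]} A_l^2$ or on $S_{\mathrm{sum}} = \sum_{l\in[L]}(A_l^2 - D_l) = M - \Delta$, where $\Delta := \sum_{l\in[L]} D_l$ is a diagonal matrix. Consequently, the perturbation to be controlled changes from $S_{\mathrm{sum}} - \tilde{S}_{\mathrm{sum}}$ to $M - \tilde{S}_{\mathrm{sum}} = (S_{\mathrm{sum}} - \tilde{S}_{\mathrm{sum}}) + \Delta$, and the task reduces to bounding the extra term $\Delta$ in spectral and row-wise senses.

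First I would control $\|\Delta\|$ in spectral norm. Since $\Delta$ is diagonal, $\|\Delta\| = \mathrm{max}_{i\in[n]} \sum_{l\in[L]} D_l(i,i) = \mathrm{max}_{i\in[n]} \sum_{l\in[L]} \sum_{j\in[n]} A_{l}(i,j)$, i.e., the maximum aggregate degree over nodes. The expected aggregate degree at any node is at most $\rho n L$ under $\mathrm{MLMMSB}(\Pi,\rho,\mathcal{B})$, so a standard Bernstein-type inequality together with $\rho n L \geq \mathrm{log}(n+L)$ yields $\|\Delta\| = O(\rho n L)$ with probability at least $1 - o(1/(n+L))$. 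By the triangle inequality, $\|M - \tilde{S}_{\mathrm{sum}}\| \leq \|S_{\mathrm{sum}} - \tilde{S}_{\mathrm{sum}}\| + O(\rho n L)$, where the first term on the right-hand side is already controlled at the desired order during the proof of Theorem \ref{mainSPDSoS}.

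Next I would invoke the Davis-Kahan $\sin\Theta$ theorem (together with the row-wise, leave-one-out machinery used for SPDSoS) to convert the spectral-norm perturbation into a bound on $\|\hat{V} - VO\|_{2\rightarrow\infty}$ for a suitable $K\times K$ orthogonal alignment matrix $O$. Under Assumption \ref{Assum22} and Condition \ref{condition}, the relevant eigengap $|\lambda_{K}(\tilde{S}_{\mathrm{sum}})|$ is of order $\rho^{2} n L$. Dividing the two pieces of the perturbation by this eigengap produces the two terms in the stated bound: $O(\sqrt{\mathrm{log}(n+L)/(\rho^{2} n^{2} L)})$, inherited from SPDSoS and arising from the stochastic part $S_{\mathrm{sum}} - \tilde{S}_{\mathrm{sum}}$, plus $O(\rho n L / (\rho^{2} n L)) = O(1/(\rho n))$, the new contribution from $\Delta$. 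Because $\Delta$ is diagonal, its row-wise effect on eigenvectors is particularly tractable and needs no concentration inequality beyond the maximum-degree bound.

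Finally, the remainder of the argument is a verbatim repetition of the SPDSoS analysis: invoke the stability of the SP algorithm to show that $\hat{V}(\hat{\mathcal{I}},:)$ approximates $V(\mathcal{I},:)$ up to row permutation, invert the estimated simplex to recover $\hat{\Pi}$, and translate the row-wise perturbation of $\hat{V}$ into the stated $l_{1}$ bound on each row of $\hat{\Pi} - \Pi\tilde{\mathcal{P}}$; the extra $O(1/(\rho n))$ term simply replaces the $O(1/n)$ appearing in Theorem \ref{mainSPDSoS}. The main obstacle is ensuring that the row-wise perturbation bound, not merely the spectral-norm bound, carries through cleanly once $\Delta$ is added; this is manageable because $\Delta$ is diagonal, so its contribution to each row of $\hat{V}$ can be absorbed using $\|V\|_{2\rightarrow\infty} = O(1/\sqrt{n})$ implied by Condition \ref{condition} together with the eigengap bound above.
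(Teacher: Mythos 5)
Your proposal follows essentially the same route as the paper: decompose $\sum_{l\in[L]}A_{l}^{2}-\tilde{S}_{\mathrm{sum}}=(S_{\mathrm{sum}}-\tilde{S}_{\mathrm{sum}})+\sum_{l\in[L]}D_{l}$, bound the maximum aggregate degree $\mathrm{max}_{i\in[n]}\sum_{l\in[L]}D_{l}(i,i)$ by $O(\rho nL)$ via Bernstein under $\rho nL\geq\mathrm{log}(n+L)$, and rerun the SPDSoS row-wise analysis so the diagonal term contributes the extra $O(\frac{1}{\rho n})$. One small slip: the relevant eigenvalue $|\lambda_{K}(\tilde{S}_{\mathrm{sum}})|$ is of order $\rho^{2}n^{2}L$ (not $\rho^{2}nL$ as you wrote), which is precisely what makes $\rho nL$ divided by it equal to $\frac{1}{\rho n}$; also note the paper uses the deterministic row-wise bound of Cape et al.\ with the $\|\cdot\|_{\infty}$ perturbation rather than leave-one-out machinery, since $\mathbb{E}[S_{\mathrm{sum}}]\neq\tilde{S}_{\mathrm{sum}}$.
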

According to Theorems \ref{mainSPDSoS} and \ref{mainSPSoS}, it is evident that both SPDSoS and SPSoS enjoy consistent mixed membership estimation. This consistency arises from the fact that their per-node error rates tend towards zero as the number of nodes $n$ (and/or the number of layers $L$) approaches infinity. Furthermore, similar to Theorem \ref{mainSPSum}, the theoretical bounds outlined in Theorems \ref{mainSPDSoS} and \ref{mainSPSoS} are independent of the selection of $\tilde{\tau}$ in Assumption \ref{Assum2}.

By the proof of Theorem \ref{mainSPSoS}, we know that SPSoS's theoretical upper bound of per-node error rate is the summation of SPDSoS's theoretical upper bound of per-node error rate and $O(\frac{1}{\rho n})$. Therefore, SPDSoS's error rate is always smaller than that of SPSoS and this indicates the benefit of the bias-removal step in $S_{\mathrm{sum}}$. Furthermore, to compare the theoretical performances between SPSum and SPSoS, we consider the following two simple cases:
\begin{itemize}
  \item When SPSoS's error rate is at the order $\frac{1}{\rho n}$, if SPSum significantly outperforms SPSoS, we have $\sqrt{\frac{\mathrm{log}(n+L)}{\rho nL}}\ll\frac{1}{\rho n}\Leftrightarrow L\gg\rho n\mathrm{log}(n+L)$.
  \item When SPSoS's error rate is at the order $\sqrt{\frac{\mathrm{log}(n+L)}{\rho^{2}n^{2}L}}$, if SPSum significantly outperforms SPSoS, we have $\sqrt{\frac{\mathrm{log}(n+L)}{\rho nL}}\ll\sqrt{\frac{\mathrm{log}(n+L)}{\rho^{2}n^{2}L}}\Leftrightarrow \rho n\ll1\Leftrightarrow \rho\ll\frac{1}{n}$. Since $\sqrt{\frac{\mathrm{log}(n+L)}{\rho nL}}\ll1$, we have $\rho\gg\frac{\mathrm{log}(n+L)}{nL}$. Combine $\rho\gg\frac{\mathrm{log}(n+L)}{nL}$ with $\rho\ll\frac{1}{n}$, we have $\frac{\mathrm{log}(n+L)}{nL}\ll\frac{1}{n}\Leftrightarrow L\gg\mathrm{log}(n+L)$.
\end{itemize}

The preceding points indicate that SPSum's superior performance over SPSoS is limited to scenarios where the number of layers $L$ is exceptionally large ($L\gg \rho n\mathrm{log}(n+L)$ or $L\gg\mathrm{log}(n+L)$). However, this requirement is unrealistic for the majority of real-world multi-layer networks. Given that SPDSoS consistently outperforms SPSoS, it follows that SPDSoS also prevails over SPSum in most scenarios. Additionally, even in the rare scenarios where SPSum significantly surpasses SPDSoS for a significantly large $L$, Theorem \ref{mainSPDSoS} assures that SPDSoS's error rate is negligible. Therefore, among these three methods, we arrive at the following conclusions: (a) SPDSoS consistently outperforms SPSoS; (b) SPSum's significant superiority over SPDSoS and SPSoS necessitates an unreasonably high number of layers, $L$, which is impractical for most real-world multi-layer networks. Consequently, we can confidently assert that SPDSoS and SPSoS virtually always surpass SPSum.
\section{Numerical results on synthetic multi-layer networks}\label{sec5}
In this section, we evaluate the performance of our proposed methods through the utilization of computer-generated multi-layer networks. For these simulated networks, we possess knowledge of the ground-truth mixed membership matrix $\Pi$. To quantify the performance of each method, we employ two metrics: the Hamming error and the Relative error. The Hamming error is defined as $\mathrm{Hamming ~error} = \mathrm{min}_{\mathcal{P}\in\mathcal{S}}\frac{\|\hat{\Pi}-\Pi\mathcal{P}\|_{1}}{n}$, while the relative error is given by $\mathrm{Relative ~error} = \mathrm{min}_{\mathcal{P}\in\mathcal{S}}\frac{\|\hat{\Pi}-\Pi\mathcal{P}\|_{F}}{\|\Pi\|_{F}}$. Here, $\mathcal{S}$ represents the set of all $K$-by-$K$ permutation matrices, accounting for potential label permutations. For each parameter setting in our simulation examples, we report the average of each metric for each proposed approach across 100 independent repetitions.

For all simulations conducted below, we set $K=3$ and let $n_{0}$ be the number of pure nodes within each community. For mixed nodes, say node $i$, we formulate its mixed membership vector $\Pi(i,:)$ in the following manner. Initially, we generate two random values $r^{(1)}_{i}=\mathrm{rand}(1)$ and $r^{(2)}_{i}=\mathrm{rand}(1)$, where $\mathrm{rand}(1)$ represents a random value drawn from a Uniform distribution over the interval $[0,1]$. Subsequently, we define the mixed membership vector as $\Pi(i,:)=(\frac{r^{(1)}_{i}}{2},\frac{r^{(2)}_{i}}{2},1-\frac{r^{(1)}_{i}}{2}-\frac{r^{(2)}_{i}}{2})$. Regarding the connectivity matrices, for the $l$-th matrix $B_{l}$, we assign $B_{l}(k,\tilde{k})=B_{l}(\tilde{k},k)=\mathrm{rand}(1)$ for all $1\leq k\leq \tilde{k}\leq K$ and $l\in[L]$. Finally, the number of nodes $n$, the sparsity parameter $\rho$, the number of layers $L$, and the number of pure nodes $n_{0}$ within each community are independently set for each simulation.

\emph{Experiment 1: Changing $\rho$.} Fix $(n,L,n_{0})=(500,100,100)$ and let $\rho$ range in$\{0.02,0.04,\ldots,0.2\}$. The multi-layer network becomes denser as $\rho$ increases. The results shown in panel (a) and panel (b) of Figure \ref{EX} demonstrate that SPDSoS performs slightly better than SPSoS, and both methods significantly outperform SPSum. Meanwhile, the error rates of SPDSoS and SPSoS decrease rapidly as the sparse parameter $\rho$ increases while SPSum's error rates decrease slowly.
\begin{figure}
\centering
\resizebox{\columnwidth}{!}{
\subfigure[Experiment 1: Hamming error]{\includegraphics[width=0.32\textwidth]{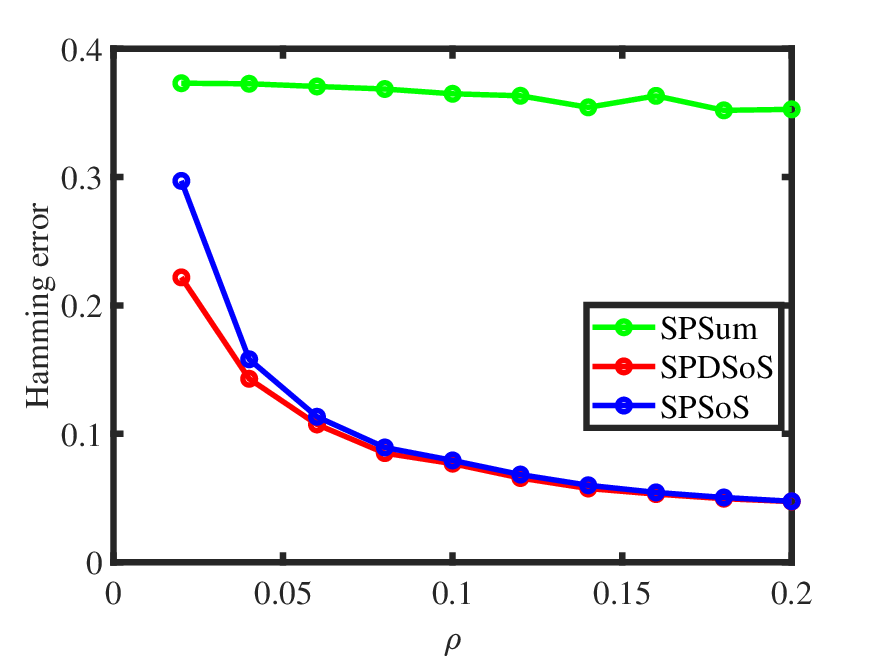}}
\subfigure[Experiment 1: Relative error]{\includegraphics[width=0.32\textwidth]{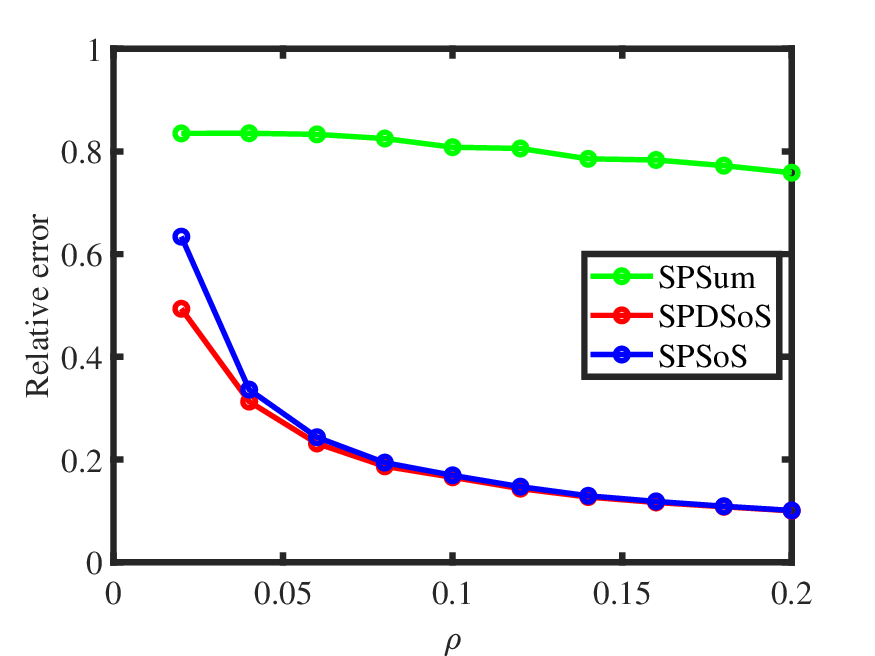}}
\subfigure[Experiment 2: Hamming error]{\includegraphics[width=0.32\textwidth]{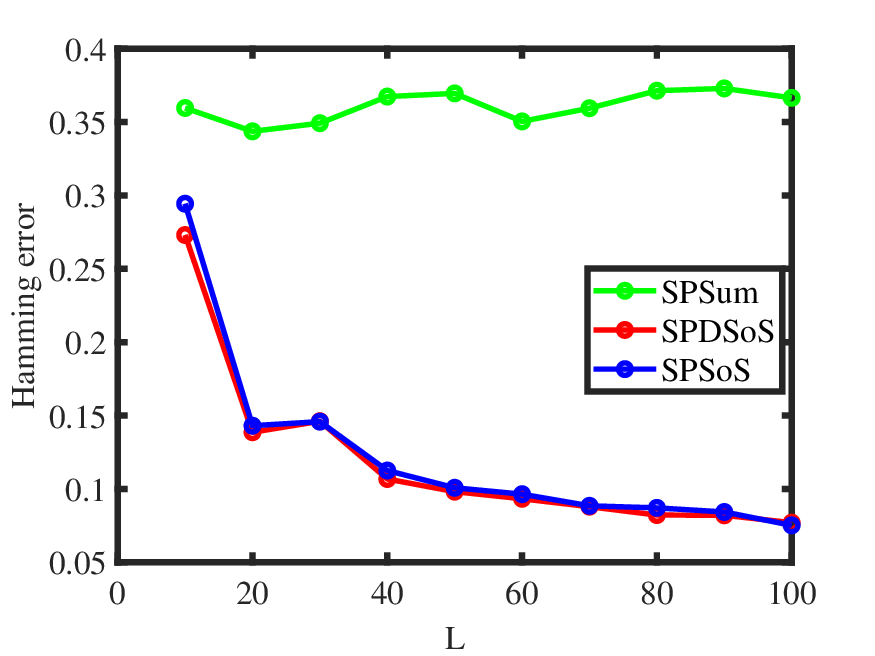}}
\subfigure[Experiment 2: Relative error]{\includegraphics[width=0.32\textwidth]{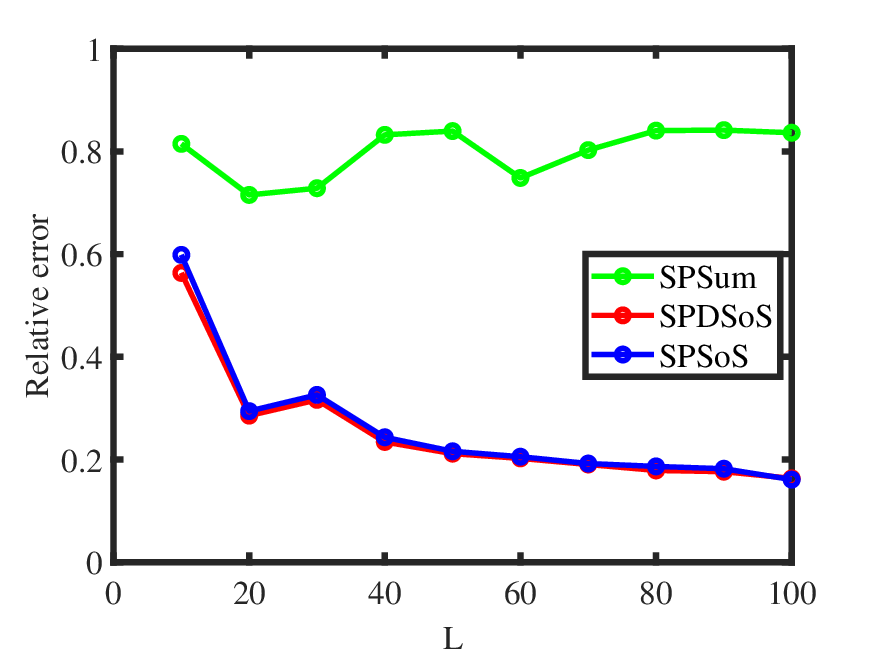}}
}
\resizebox{\columnwidth}{!}{
\subfigure[Experiment 3: Hamming error]{\includegraphics[width=0.32\textwidth]{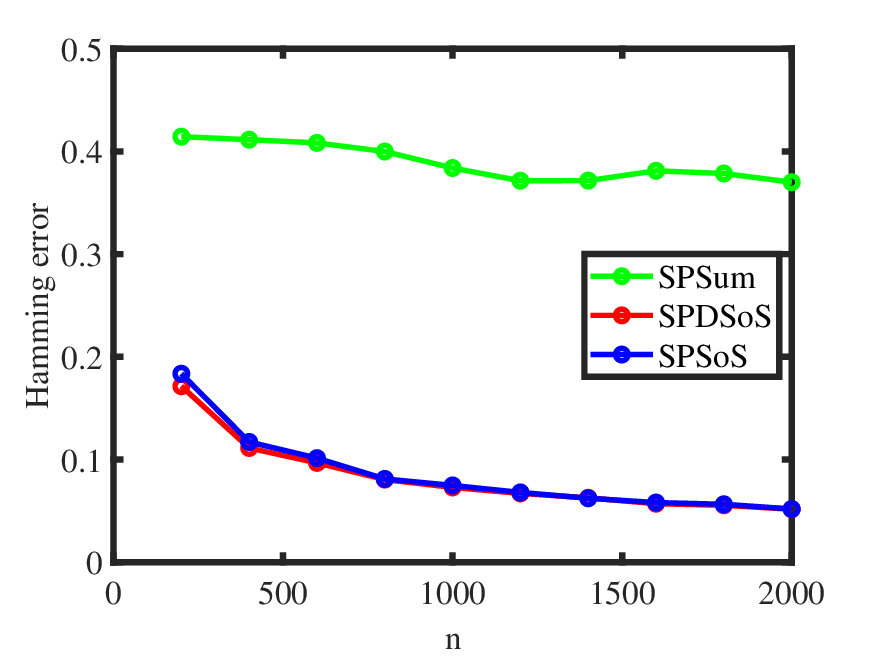}}
\subfigure[Experiment 3: Relative error]{\includegraphics[width=0.32\textwidth]{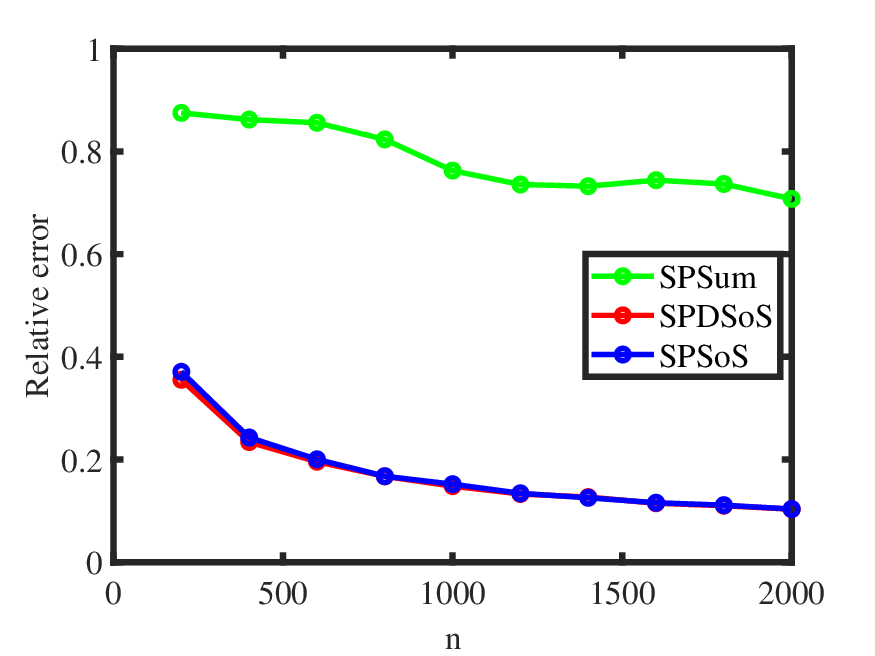}}
\subfigure[Experiment 4: Hamming error]{\includegraphics[width=0.32\textwidth]{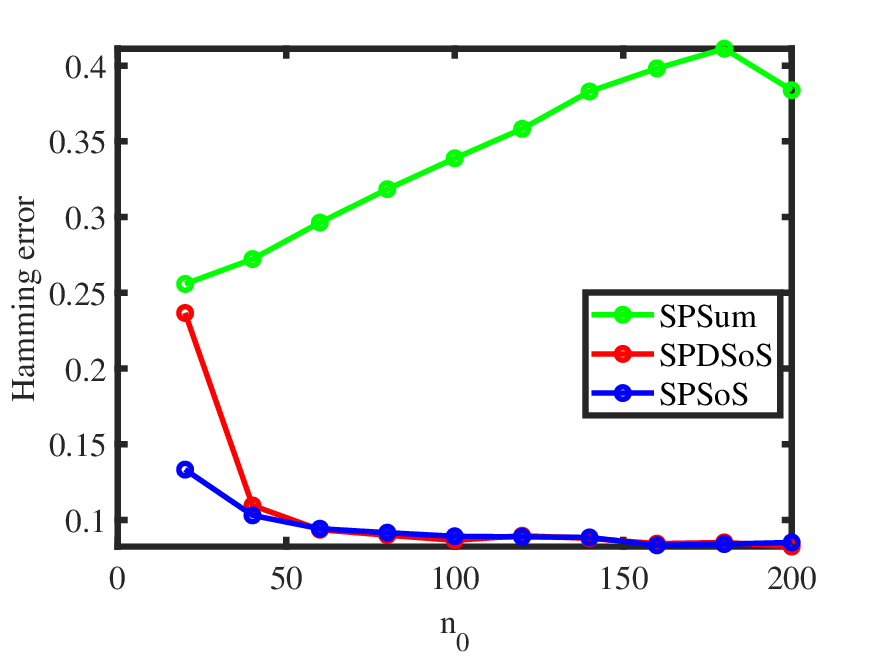}}
\subfigure[Experiment 4: Relative error]{\includegraphics[width=0.32\textwidth]{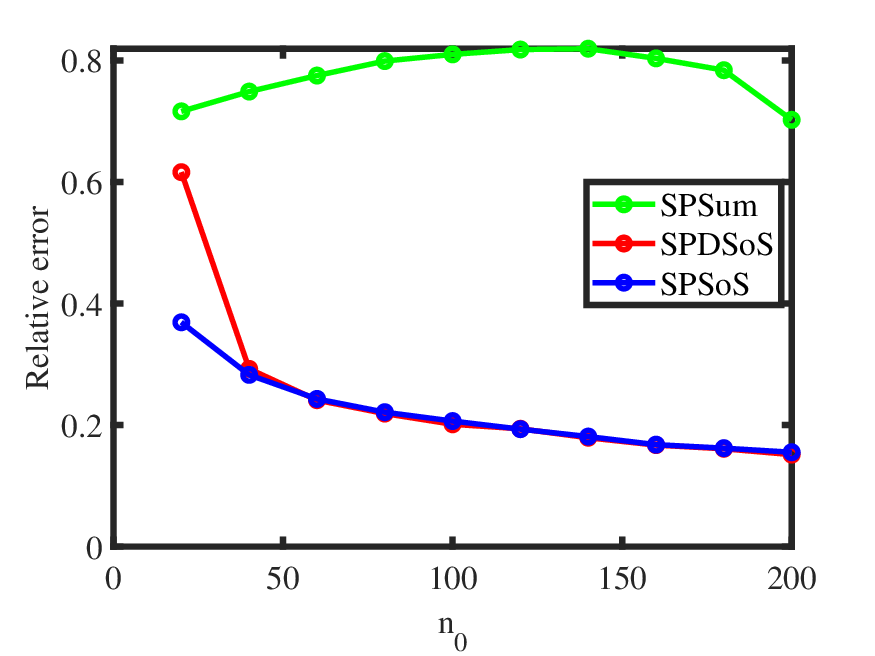}}
}
\caption{Numerical results.}
\label{EX} 
\end{figure}

\emph{Experiment 2: Changing $L$.} Fix $(n,\rho,n_{0})=(500,0.1,100)$ and let $L$ range in$\{10,20,\ldots,100\}$. More layers are observed as $L$ increases. The results are presented in panel (c) and panel (d) of Figure \ref{EX}. It is evident that SPDSoS and SPSoS exhibit comparable performance, both significantly surpassing SPSum. Furthermore, as $L$ increases, SPDSoS and SPSoS demonstrate improved performance, whereas SPSum's performance remains relatively unchanged throughout this experiment.

\emph{Experiment 3: Changing $n$.} Fix $(L,\rho)=(40,0.1)$, let $n$ range in $\{200,400,\ldots,2000\}$, and let $n_{0}=\frac{n}{4}$ for each choice of $n$. Panel (e) and panel (f) of Figure \ref{EX} display the results, which demonstrate that SPDSoS and SPSoS enjoy similar performances and that both methods perform better than SPSum. Meanwhile, we also observe that the error rates of all methods decrease as $n$ increases here.

\emph{Experiment 4: Changing $n_{0}$.} Fix $(n,L,\rho)=(600,50,0.1)$ and let $n_{0}$ range in $\{20,40,\ldots,200\}$. The number of pure nodes increases as $n_{0}$ grows. The results are displayed in the final two panels of Figure \ref{EX}. Our observations indicate that SPDSoS and SPSoS exhibit notably superior performance compared to SPSum. Furthermore, both SPDSoS and SPSoS exhibit improved performance in scenarios with a higher number of pure nodes, whereas SPSum demonstrates weaker performance in this experiment.
\section{Real data applications}\label{sec6realdata}
In this section, we demonstrate the application of our methods to multi-layer networks in the real world. The application of a mixed membership estimation algorithm to such networks consistently yields an estimated mixed membership matrix, denoted as $\hat{\Pi}$. Notably,  $\hat{\Pi}$ may vary depending on the specific algorithm used. Consequently, accurately assessing the quality of the estimated mixed membership community partition becomes a crucial problem. To address this challenge, we introduce two modularity metrics in this paper, designed to quantitatively evaluate the quality of mixed membership community detection in real-world multi-layer networks.

Recall that $A_{\text{sum}} = \sum_{l \in [L]} A_{l}$ represents the summation of all adjacency matrices. This summation effectively quantifies the weight between nodes. Given that nodes sharing similar community memberships tend to exhibit stronger connections than those with different memberships, $A_{\text{sum}}$ can be interpreted as the weighted adjacency matrix of an assortative network \citep{newman2002assortative,newman2003mixing}. Our fuzzy sum modularity, $Q_{\text{fsum}}$, is defined as follows:
\begin{align*}
Q_{\text{fsum}} = \frac{1}{m_{\text{sum}}} \sum_{i \in [n]} \sum_{j \in [n]} \left( A_{\text{sum}}(i,j) - \frac{d_{\text{sum}}(i)d_{\text{sum}}(j)}{m_{\text{sum}}} \right) \hat{\Pi}(i,:) \hat{\Pi}'(j,:),
\end{align*}
where $\text{fsum}$ stands for ``fuzzy sum'', $d_{\text{sum}}(i) = \sum_{j \in [n]} A_{\text{sum}}(i,j)$ for $i \in [n]$, and $m_{\text{sum}} = \sum_{i \in [n]} d_{\text{sum}}(i)$. Notably, when $L = 1$, our fuzzy sum modularity $Q_{\text{fsum}}$ simplifies to the fuzzy modularity introduced in Equation (14) of \citep{nepusz2008fuzzy}. Furthermore, when $L = 1$ and all nodes are pure, our modularity metric reduces to the classical Newman-Girvan modularity \citep{newman2004finding}.

Our second modularity metric is the fuzzy mean modularity, defined as the average of fuzzy modularities across all layers. Here's how it is formulated:
\begin{align*}
Q_{\text{fmean}} = \frac{1}{L} \sum_{l \in [L]} \sum_{i \in [n]} \sum_{j \in [n]} \frac{1}{m_{l}} \left( A_{l}(i,j) - \frac{D_{l}(i,i)D_{l}(j,j)}{m_{l}} \right) \hat{\Pi}(i,:) \hat{\Pi}'(j,:),
\end{align*}
where $\text{fmean}$ stands for ``fuzzy mean'' and $m_{l} = \sum_{i \in [n]} D_{l}(i,i)$ represents the sum of diagonal elements in the degree matrix $D_{l}$ for layer $l$ for $l\in[L]$. When all nodes are pure, our fuzzy mean modularity $Q_{\text{fmean}}$ simplifies to the multi-normalized average modularity introduced in Equation (3.1) of \cite{paul2021null}. When there is only a single layer ($L=1$), $Q_{\text{fmean}}$ reduces to the fuzzy modularity described in \citep{nepusz2008fuzzy}. Furthermore, if both conditions hold ($L=1$ and all nodes are pure), $Q_{\text{fmean}}$ degenerates to the classical Newman-Girvan modularity \citep{newman2004finding}.

Analogous to the Newman-Girvan modularity \citep{newman2004finding}, the fuzzy modularity \citep{nepusz2008fuzzy}, and the multi-normalized average modularity \cite{paul2021null}, a higher value of the fuzzy sum modularity $Q_{\text{fsum}}$ indicates a better community partition. Consequently, we consistently favor a larger value of $Q_{\text{fsum}}$. Similarly, a larger $Q_{\text{fmean}}$ also indicates a better community partition. To the best of our knowledge, our fuzzy sum modularity $Q_{\text{fsum}}$ and fuzzy mean modularity $Q_{\text{fmean}}$ are the first metrics designed to evaluate the quality of mixed membership community detection in multi-layer networks.

For real-world multi-layer networks, where the number of communities $K$ is usually unknown, we adopt the strategy introduced in \citep{nepusz2008fuzzy} to estimate $K$. Specifically, we determine $K$ by selecting the one that maximizes the fuzzy sum modularity $Q_{\text{fsum}}$ (or the fuzzy mean modularity $Q_{\text{fmean}}$). This strategy ensures that we obtain an optimal community partition based on the chosen modularity metric.

In this paper, we consider the following real-world multi-layer networks, which can be accessed at \url{https://manliodedomenico.com/data.php}:
\begin{itemize}
\item \textbf{Lazega Law Firm}: This data is a multi-layer social network with $n=71$ nodes and $L=3$ layers \citep{lazega2001collegial,snijders2006new}. For this data, nodes represent company partners and layers denote different cooperative relationships (Advice, Friendship, and Co-work) among partners.
  \item \textbf{C.Elegans}: This data is from biology and it collects the connection of Caenorhabditis elegans  \citep{chen2006wiring}. It has $n=279$ nodes and $L=3$ layers with nodes denoting Caenorhabditis elegans and layers being different synaptic junctions (Electric, Chemical Monadic, and Chemical Polyadic).
  \item \textbf{CS-Aarhus}: This data is a multi-layer social network with $n=61$ nodes and $L=5$ layers, where nodes represent employees of the Computer Science department at Aarhus and layers denote different relationships (Lunch, Facebook, Coauthor, Leisure, Work) \citep{magnani2013combinatorial}.
  \item \textbf{FAO-trade}: This data collects different types of trade relationships among countries from FAO (Food and Agriculture Organization of the United Nations) \citep{de2015structural}. It has $n=214$ nodes and $L=364$ layers, where nodes represent countries, layers denote products, and edges denote trade relationships among countries.
\end{itemize}

The estimated number of communities and the corresponding modularity of each method for the real data analyzed in this paper are comprehensively presented in Table \ref{realdataQfsum} and Table \ref{realdataQfmean}. After analyzing the results in these tables, we arrive at the following conclusions:
\begin{itemize}
\item $Q_{\text{fsum}}$ and $Q_{\text{fmean}}$ consistently demonstrate similar values for each method, indicating a high degree of agreement. Specifically, SPSum achieves the highest modularity score using both metrics across multiple datasets, including Lazega Law Firm, CS-Aarhus, and FAO-trade. It is worth mentioning that this observation (SPSum surpassing the other two methods) contrasts with both theoretical and numerical discoveries. A plausible explanation for this discrepancy could be that $Q_{\text{fsum}}$ and $Q_{\text{fmean}}$ are computed directly from $\{A_{l}\}^{L}_{l=1}$ rather than $\{A^{2}_{l}\}^{L}_{l=1}$. Similarly, for SPSoS, its $Q_{\text{fsum}}$ and $Q_{\text{fmean}}$ scores for CS-Aarhus rank second among all methods, with scores that are closely aligned. This consistency across measures suggests that both $Q_{\text{fsum}}$ and $Q_{\text{fmean}}$ effectively capture similar aspects of community structure, providing reliable and comparable assessments of different methods.
  \item While SPSum may not perform as well in numerical studies, it nevertheless exhibits superior performance in terms of modularity for real-world datasets compared to SPDSoS and SPSoS. The only exception is the C.Elegans network, where SPSum's modularity score is slightly lower than SPDSoS's.
  \item The CS-Aarhus network exhibits a more distinct community structure compared to the other three real multi-layer networks, as evidenced by the higher modularity scores obtained by our methods for CS-Aarhus. Conversely, the FAO-trade network possesses the least discernible community structure among all real multi-layer networks, as the modularity scores achieved by all proposed methods for FAO-trade are lower than those of the other three networks.
  \item The results presented in Table \ref{realdataQfsum} and Table \ref{realdataQfmean} indicate that the optimal number of communities for the four real-world multi-layer networks, namely Lazega Law Firm, CS-Aarhus, CS-Aarhus, and FAO-trade, are 3, 2, 5, and 2, respectively. This determination is based on selecting the value of $K$ that yields the highest modularity score across all methods for each dataset.
\end{itemize}
\begin{table}[h!]
\centering
\caption{$(\mathrm{Estimated~}K, Q_{\text{fsum}})$ obtained by the proposed approaches for the real data used in this paper. The boldface values represent the highest $Q_{\text{fsum}}$ scores among the three methods.}
\label{realdataQfsum}
\begin{tabular}{cccccccccccc}
\hline\hline
Dataset&SPSum&SPDSoS&SPSoS\\
\hline
Lazega Law Firm&(3,\textbf{0.2025})&(3,0.1604)&(3,0.1993)\\
C.Elegans&(2,0.2778)&(2,0.\textbf{2808})&(2,0.2509)\\
CS-Aarhus&(5,\textbf{0.3575})&(4,0.3474)&(4,0.3559)\\
FAO-trade&(2,\textbf{0.1508})&(2,0.1329)&(2,0.1412)\\
\hline\hline
\end{tabular}
\end{table}

\begin{table}[h!]
\centering
\caption{$(\mathrm{Estimated~}K, Q_{\text{fmean}})$ obtained by the proposed approaches for the real data used in this paper. The boldface values represent the highest $Q_{\text{fmean}}$ scores among the three methods.}
\label{realdataQfmean}
\begin{tabular}{cccccccccccc}
\hline\hline
Dataset&SPSum&SPDSoS&SPSoS\\
\hline
Lazega Law Firm&(3,\textbf{0.1990})&(3,0.1572)&(3,0.1961)\\
C.Elegans&(2,0.2779)&(2,\textbf{0.2780})&(2,0.2495)\\
CS-Aarhus&(5,\textbf{0.3681})&(4,0.3404)&(4,0.3529)\\
FAO-trade&(2,\textbf{0.1487})&(2,0.1274)&(2,0.1403)\\
\hline\hline
\end{tabular}
\end{table}

To simplify our analysis and consider the fact that SPSum generates modularity scores that are either higher or comparable to those of SPDSoS and SPSoS for the four real-world multi-layer networks under consideration, we focus our further analysis on the SPSum method. Let $\hat{\Pi}$ represent the estimated mixed membership matrix obtained by SPSum for a given real-world multi-layer network. For a node $i$ within the network, we define its estimated home base community as the one that corresponds to the maximum value in the $i$-th row of $\hat{\Pi}$, denoted as $\tilde{k}=\mathrm{argmax}_{k\in[K]}\hat{\Pi}(i,k)$. Furthermore, we categorize nodes based on their membership distribution: a node is considered highly mixed if $\mathrm{max}_{k\in[K]}\hat{\Pi}(i,k)\leq0.6$, highly pure if $\mathrm{max}_{k\in[K]}\hat{\Pi}(i,k)\geq0.9$, and neutral otherwise. We introduce two additional metrics, $\varsigma_{mixed}$ and $\varsigma_{pure}$, which represent the proportions of highly mixed and highly pure nodes, respectively. Additionally, we define the balanced parameter $\upsilon$ as the ratio of the minimum to the maximum $l_1$ norm of the columns of $\hat{\Pi}$, i.e., $\upsilon=\frac{\mathrm{min}_{k\in[K]}\|\hat{\Pi}(:,k)\|_{1}}{\mathrm{max}_{k\in[K]}\|\hat{\Pi}(:,k)\|_{1}}$. A higher value of $\upsilon$ indicates a more balanced multi-layer network. Table \ref{realdata3indices} presents the values of these three indices obtained by applying SPSum to the real-world multi-layer networks studied in this paper. Based on the results in Table \ref{realdata3indices}, we draw the following conclusions:
\begin{itemize}
  \item Across all networks, the number of highly pure nodes significantly exceeds the number of highly mixed nodes, indicating that most nodes are strongly associated with a single community, while only a few exhibit mixed membership.
  \item In the Lazega Law Firm network, approximately 10 nodes are highly mixed, 35 nodes are highly pure, and 26 are neutral. Meanwhile, this network exhibits the highest proportion of highly mixed nodes among the four real-world multi-layer networks analyzed.
  \item The C.Elegans network has the lowest proportion of highly mixed nodes and the highest proportion of highly pure nodes. Its balanced parameter, $\upsilon$, is 0.9512, the highest among all networks, indicating that the sizes of the two estimated communities in C.Elegans are nearly identical.
  \item The CS-Aarhus network exhibits indices that are comparable to those of the Lazega Law Firm network.
  \item FAO-trade displays a proportion of highly mixed (and pure) nodes similar to C.Elegans. However, its balanced parameter is the lowest among all networks, indicating that FAO-trade exhibits the most unbalanced community structure among the four real-world datasets.
\end{itemize}

\begin{table}[h!]
\footnotesize
	\centering
	\caption{$\varsigma_{mixed},\varsigma_{pure}$, and $\upsilon$ computed from $\hat{\Pi}$, where $\hat{\Pi}$ is returned by running the SPSum algorithm to real multi-layer networks used in this paper. Here, we use the Estimated $K$ of SPSum in Table \ref{realdataQfsum} for each data.}
	\label{realdata3indices}
	\begin{tabular}{cccccccccccc}
\hline\hline
Dataset&$\varsigma_{\mathrm{mixed}}$&$\varsigma_{\mathrm{pure}}$&$\upsilon$\\
\hline
Lazega Law Firm&0.1408&0.4930&0.7276\\
C.Elegans&0.0609&0.6882&0.9512\\
CS-Aarhus&0.1311&0.4590&0.7006\\
FAO-trade&0.0701&0.6449&0.5480\\
\hline\hline
\end{tabular}
\end{table}

\begin{figure}
\centering
\includegraphics[width=0.4\textwidth]{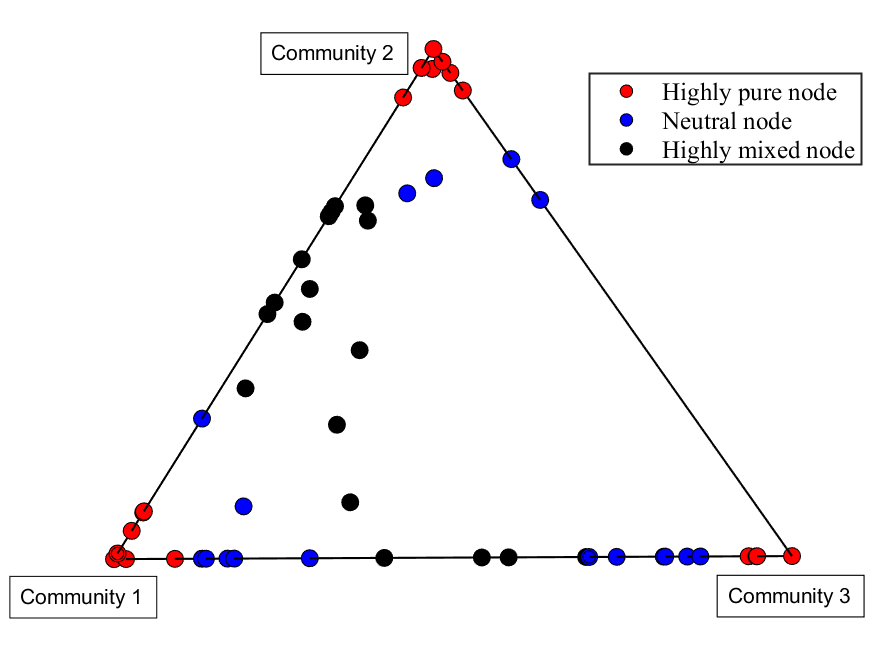}
\caption{Ternary diagram of the $71\times3$ estimated membership matrix $\hat{\Pi}$ for Lazega Law Firm. Each dot represents a company partner, and its location within the triangle corresponds to its membership scores.}
\label{LLFPi} 
\end{figure}

\begin{figure}
\centering
\resizebox{\columnwidth}{!}{
\subfigure[]{\includegraphics[width=0.2\textwidth]{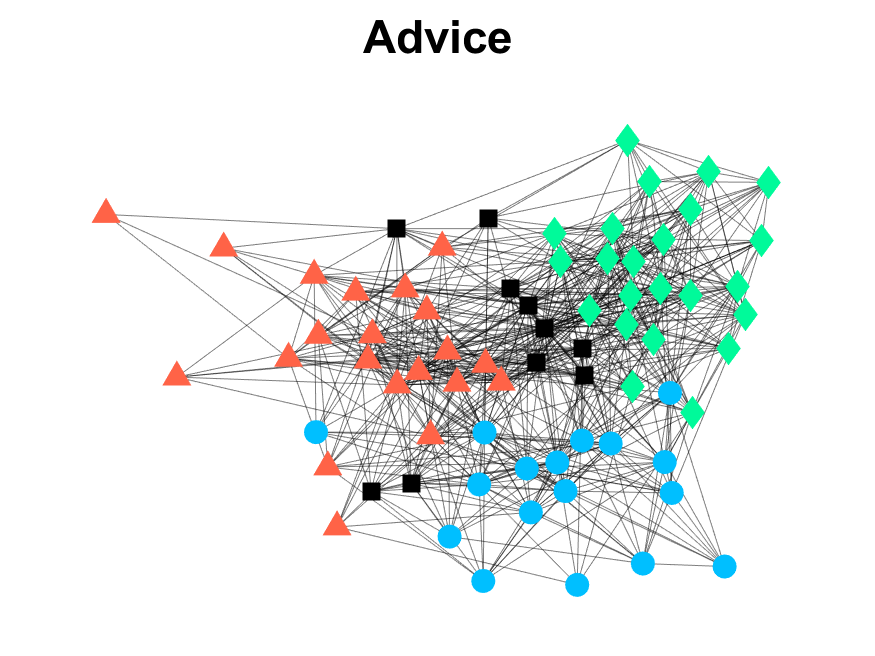}}
\subfigure[]{\includegraphics[width=0.2\textwidth]{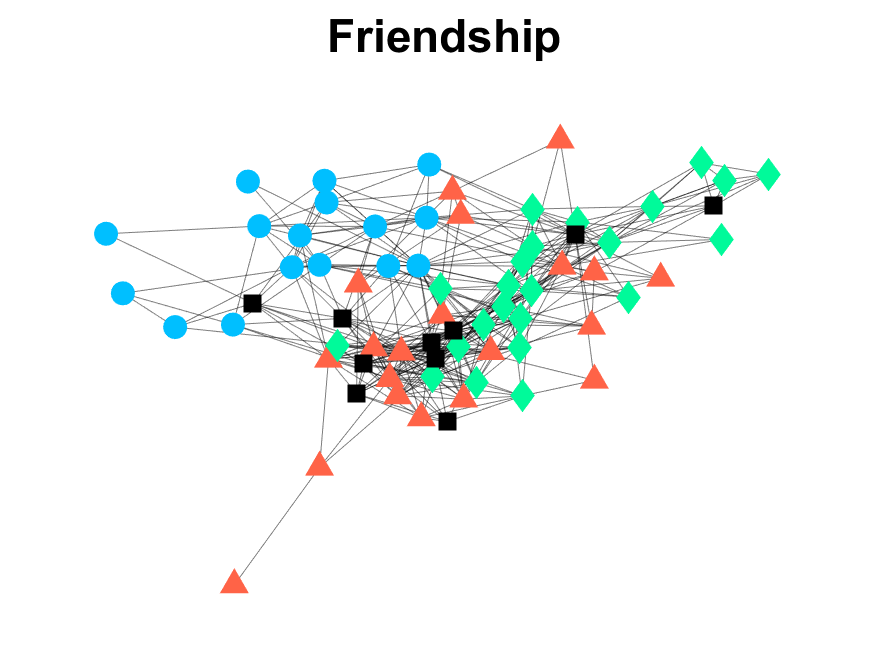}}
\subfigure[]{\includegraphics[width=0.2\textwidth]{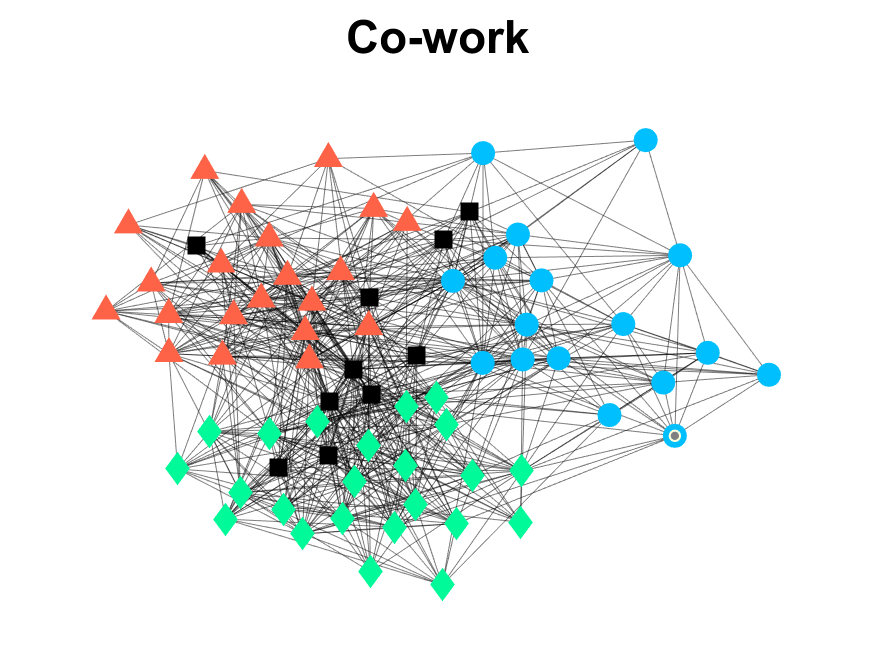}}
}
\caption{Illustration of the 3 layers of Lazega Law Firm. Colors (shapes) indicate home based communities and the black square represents highly mixed nodes detected by SPSum with 3 communities. For all layers, we only plot the largest connected graph.}
\label{LLFN} 
\end{figure}

\begin{figure}
\centering
\resizebox{\columnwidth}{!}{
\subfigure[]{\includegraphics[width=0.2\textwidth]{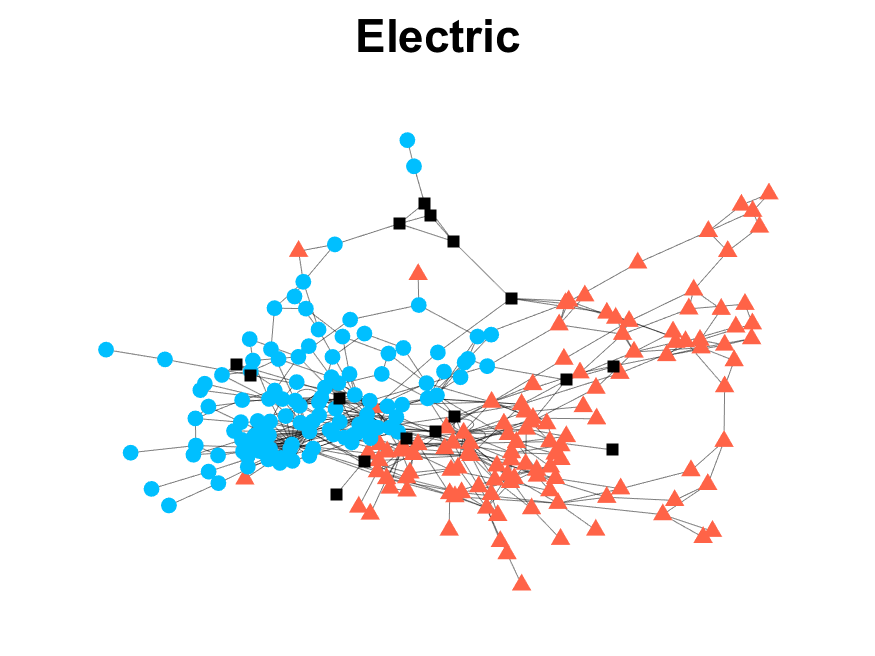}}
\subfigure[]{\includegraphics[width=0.2\textwidth]{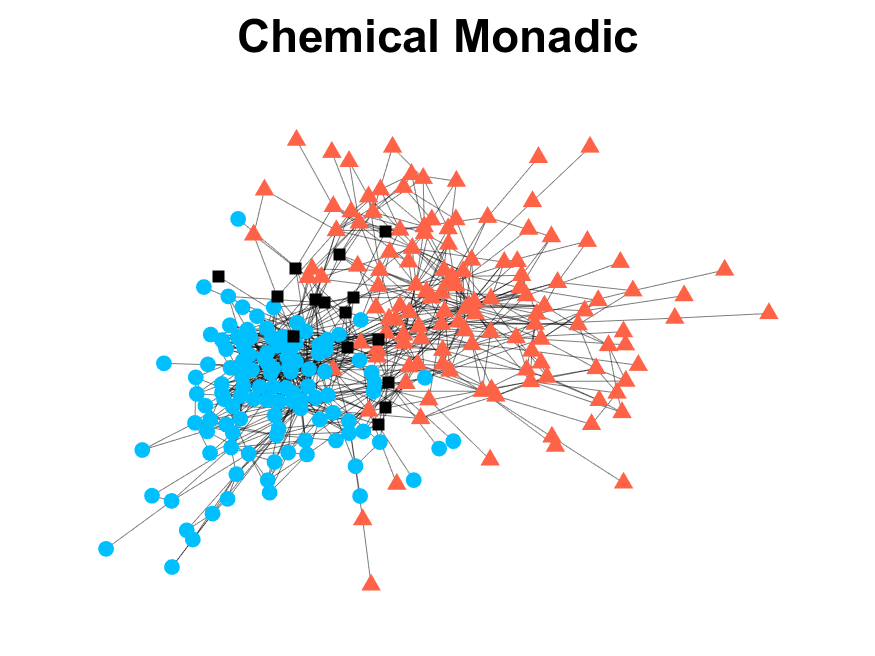}}
\subfigure[]{\includegraphics[width=0.2\textwidth]{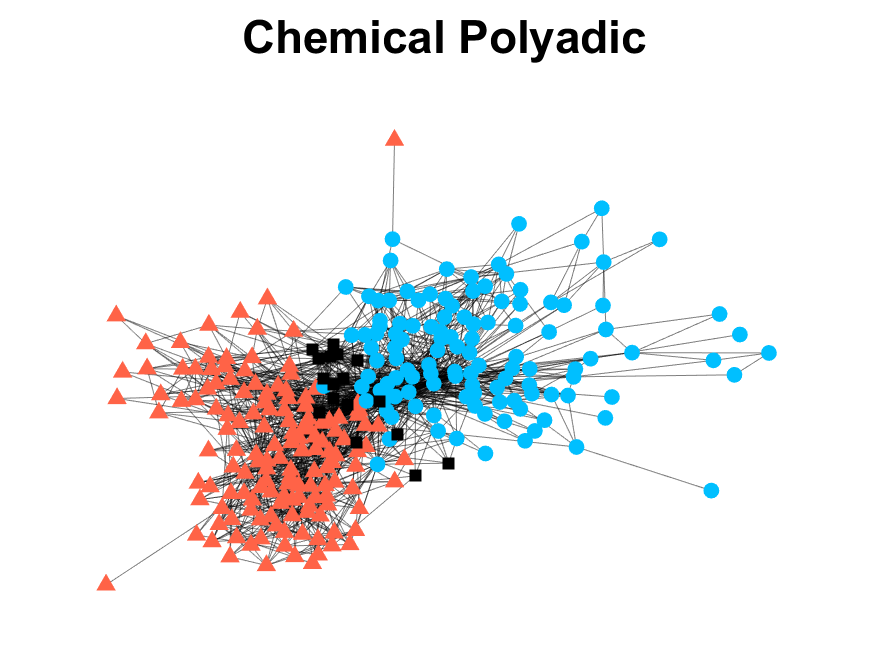}}
}
\caption{Illustration of the 3 layers of C.Elegans. Colors (shapes) indicate home based communities and the black square represents highly mixed nodes detected by SPSum with 2 communities. For all layers, we only plot the largest connected graph.}
\label{CEN} 
\end{figure}

\begin{figure}
\centering
\resizebox{\columnwidth}{!}{
\subfigure[]{\includegraphics[width=0.2\textwidth]{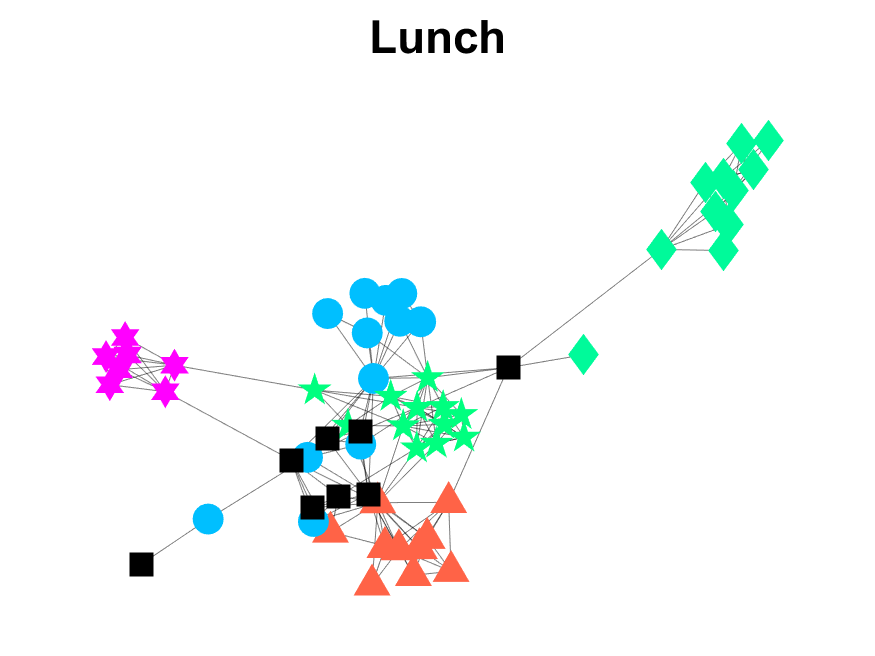}}
\subfigure[]{\includegraphics[width=0.2\textwidth]{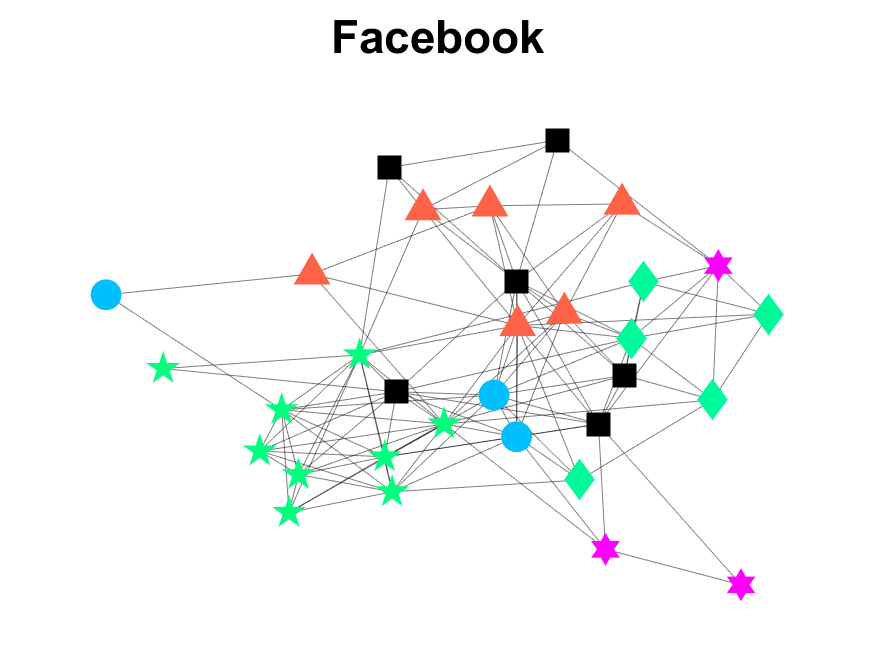}}
\subfigure[]{\includegraphics[width=0.2\textwidth]{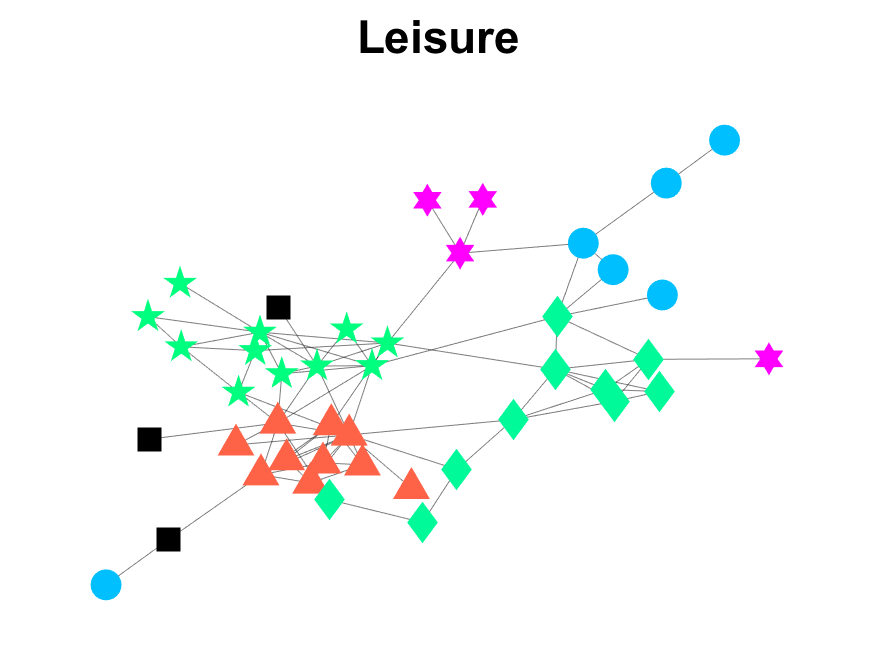}}
\subfigure[]{\includegraphics[width=0.2\textwidth]{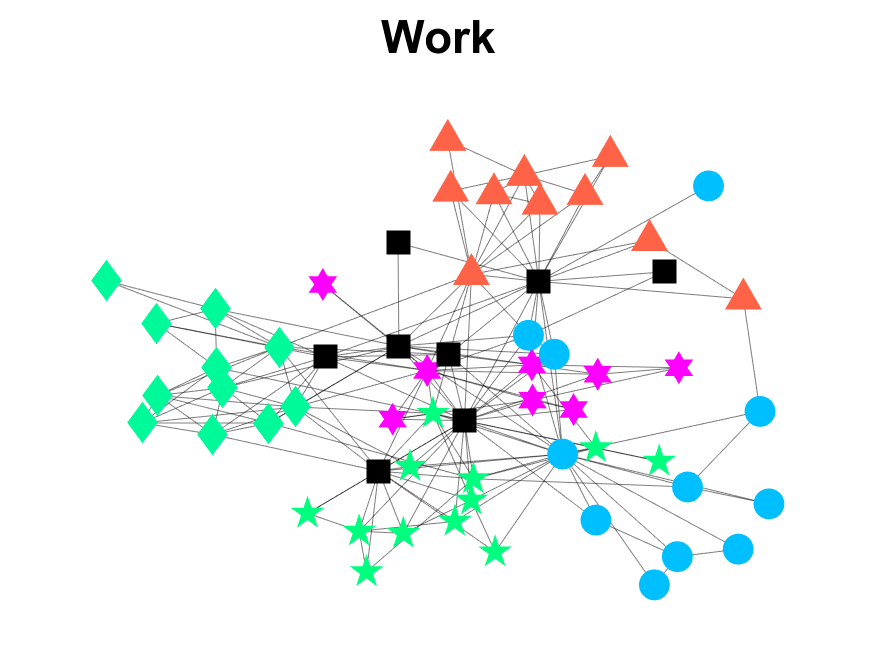}}
}
\caption{Illustration of the Lunch, Facebook, Leisure, and Work layers of CS-Aarhus, where we do not show the Coauthor layer because it is too sparse. Colors (shapes) indicate home based communities and the black square represents highly mixed nodes detected by SPSum with 5 communities. For all layers, we only plot the largest connected graph.}
\label{CSN} 
\end{figure}

\begin{figure}
\centering
\resizebox{\columnwidth}{!}{
\subfigure[]{\includegraphics[width=0.2\textwidth]{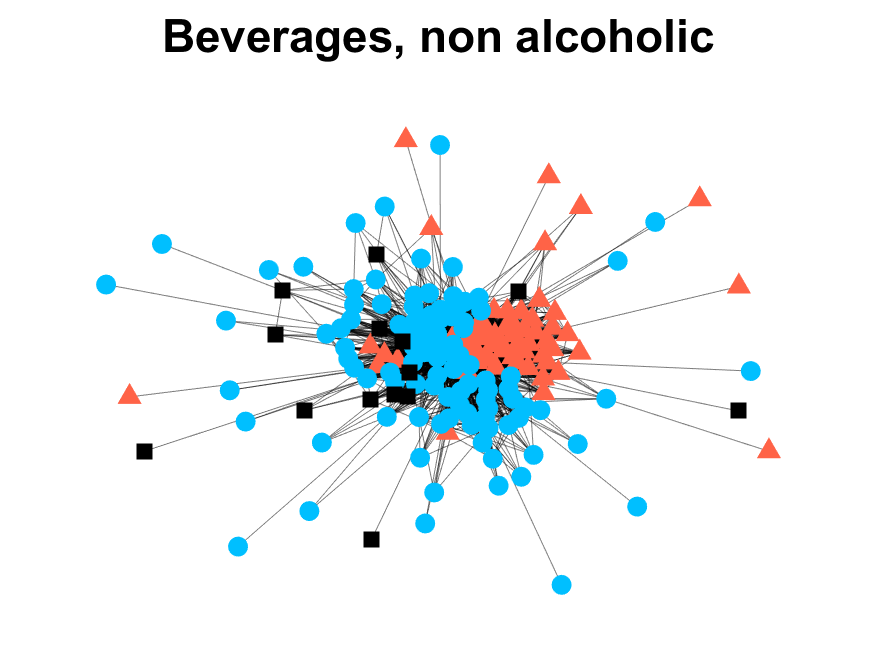}}
\subfigure[]{\includegraphics[width=0.2\textwidth]{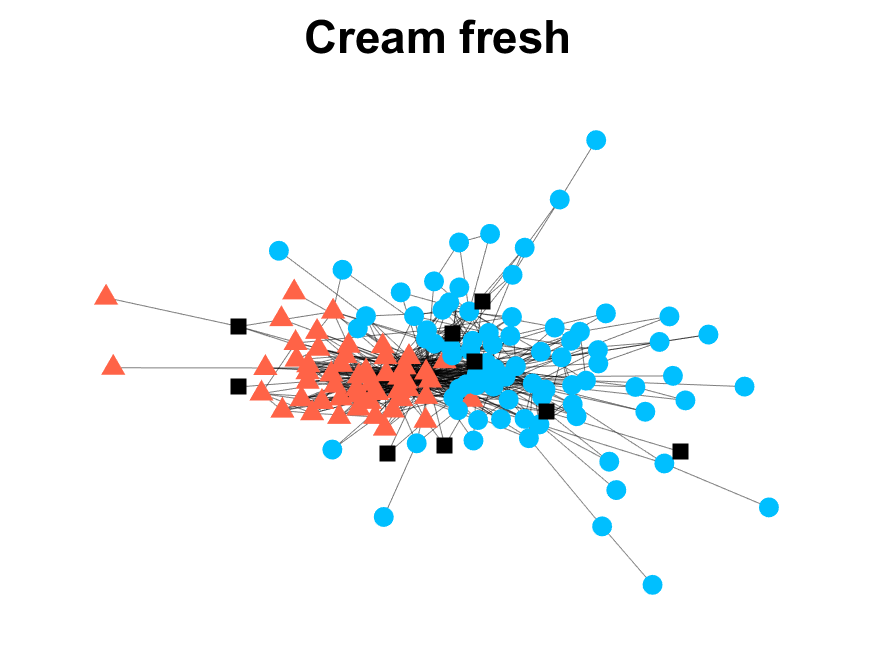}}
\subfigure[]{\includegraphics[width=0.2\textwidth]{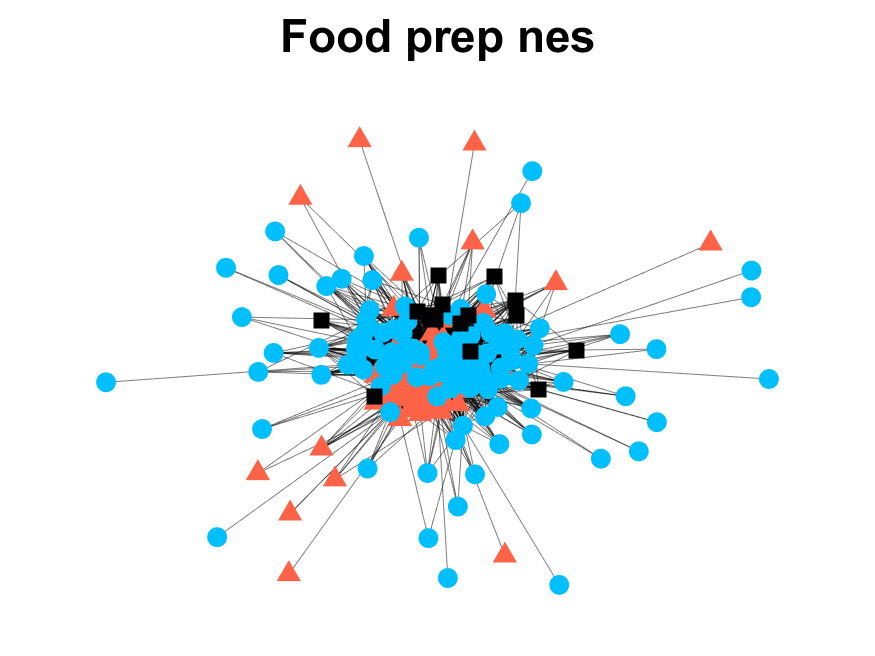}}
}
\resizebox{\columnwidth}{!}{
\subfigure[]{\includegraphics[width=0.2\textwidth]{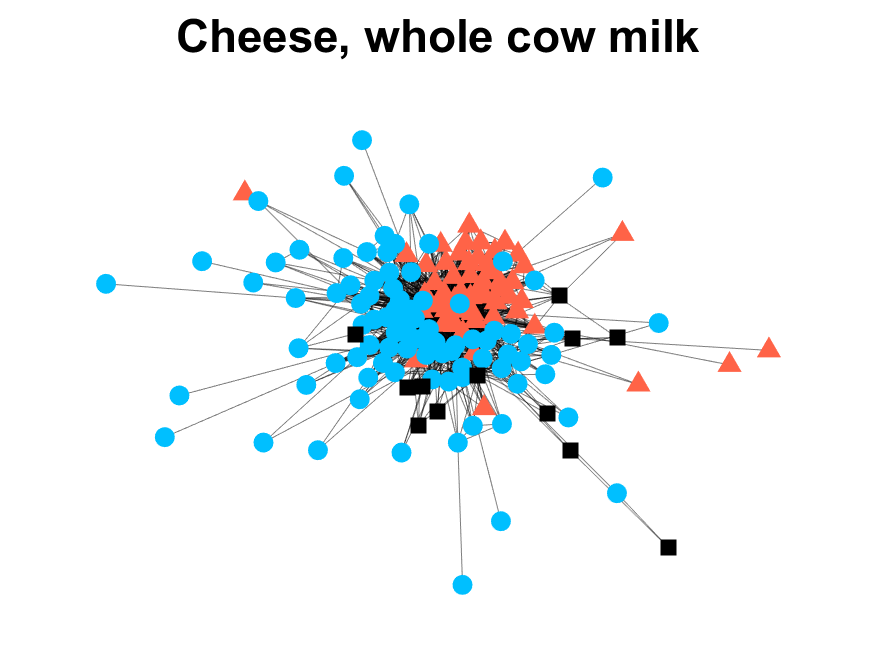}}
\subfigure[]{\includegraphics[width=0.2\textwidth]{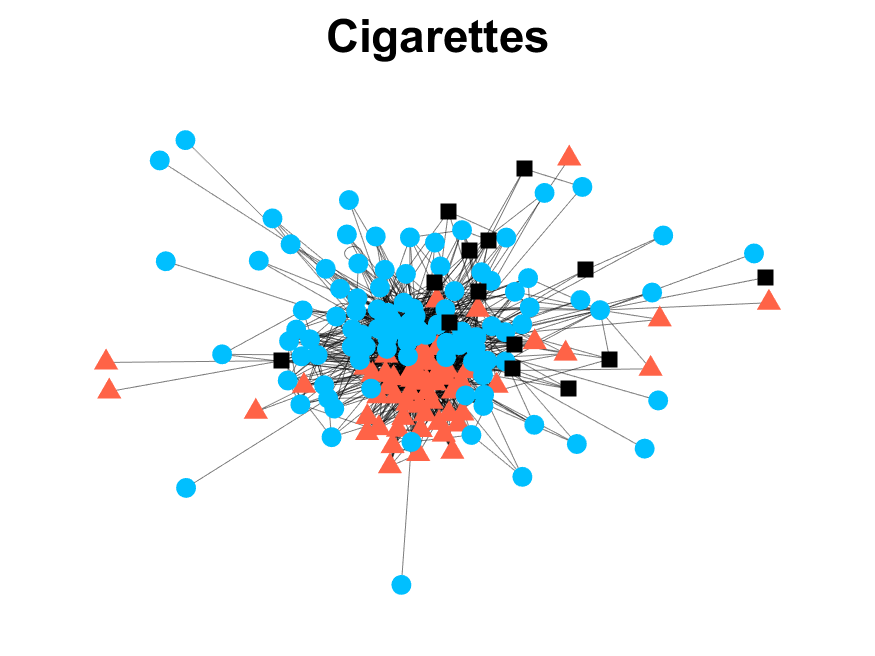}}
\subfigure[]{\includegraphics[width=0.2\textwidth]{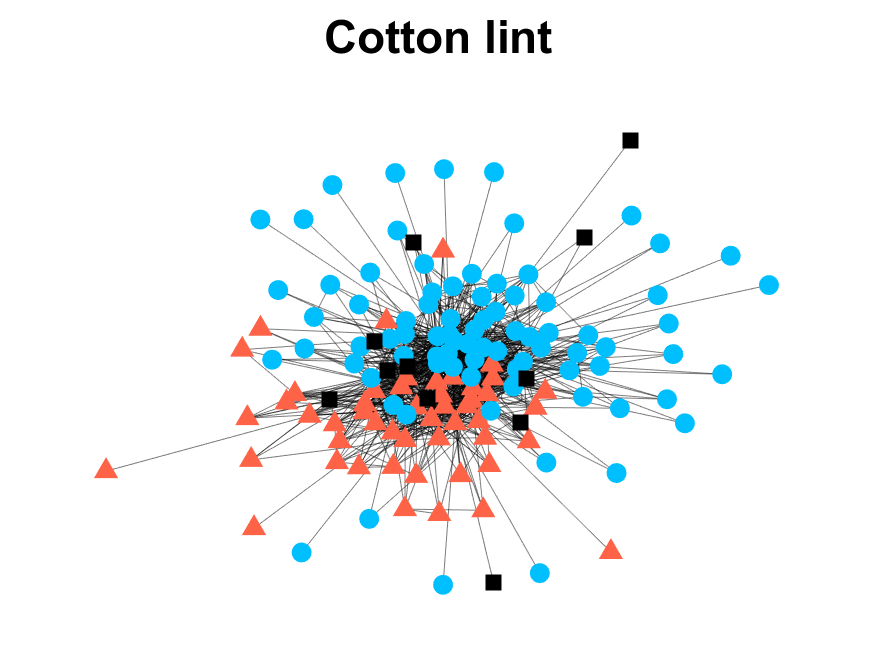}}
}
\caption{Illustration of the six layers of six products in FAO-trade. Colors (shapes) indicate home based communities and the black square represents highly mixed nodes detected by SPSum with 2 communities. For all layers, we only plot the largest connected graph.}
\label{FAON} 
\end{figure}

Figure \ref{LLFPi} presents a ternary diagram that visualizes the estimated community membership matrix $\hat{\Pi}$ obtained from the SPSum algorithm for the Lazega Law Firm network when there are three communities. In this diagram, we observe that node $i$ is positioned closer to one of the triangle's vertices compared to node $j$ if $\mathrm{max}_{k\in[3]}\hat{\Pi}(i,k)>\mathrm{max}_{k\in[3]}\hat{\Pi}(j,k)$ for $i\in[71], j\in[71]$. A pure node is located at one of the triangle's vertices, and a neutral node is closer to a vertex than a highly mixed node. Therefore, Figure \ref{LLFPi} indicates the purity of each node for the Lazega Law Firm network.

Figures \ref{LLFN}-\ref{FAON} present the communities estimated by our SPSum method for the four real multi-layer networks. In these figures, nodes sharing the same color represent nodes belonging to the same home base community, while black squares denote highly mixed nodes. From these figures, we can clearly find the communities detected by SPSum in each layer for each multi-layer network.
\section{Conclusion}\label{sec7}
This paper considers the problem of estimating community memberships of nodes in multi-layer networks under the multi-layer mixed membership stochastic block model, a model that permits nodes to belong to multiple communities simultaneously. We have developed spectral methods leveraging the eigen-decomposition of some aggregate matrices, and have provided theoretical guarantees for their consistency, demonstrating the convergence of per-node error rates as the number of nodes and/or layers increases under MLMMSB. To the best of our knowledge, this is the first work to estimate the mixed community memberships for multi-layer networks by using these aggregate matrices. Our theoretical analysis reveals that the algorithm designed based on the debiased sum of squared adjacency matrices always outperforms the algorithm using the sum of squared adjacency matrices, while they generally outperform the method using the sum of adjacency matrices in the task of estimating mixed memberships for multi-layer networks. Such a result is new for mixed membership estimation in multi-layer networks to the best of our knowledge. Extensive simulated studies support our theoretical findings, validating the efficiency of our method using the debiased sum of adjacency matrices. Additionally, the proposed fuzzy modularity measures offer a novel perspective for evaluating the quality of mixed membership community detection in multi-layer networks.

For future research, first, developing methods with theoretical guarantees for estimating the number of communities in MLMMSB remains a challenging and meaningful task. Second, accelerating our methods for detecting mixed memberships in large-scale multi-layer networks is crucial for practical applications. Third, exploring more efficient algorithms for estimating mixed memberships would further enrich our understanding of community structures in multi-layer networks. Finally, extending our framework to directed multi-layer networks would broaden the scope of our work and enable the analysis of even more complex systems.
\section*{CRediT authorship contribution statement}
\textbf{Huan Qing:} Conceptualization; Data curation; Formal analysis; Funding acquisition; Methodology; Project administration; Resources; Software; Validation; Visualization; Writing-original draft; Writing-review $\&$ editing.
\section*{Declaration of competing interest}
The author declares no competing interests.
\section*{Data availability}
Data and code will be made available on request.
\section*{Acknowledgements}
H.Q. was sponsored by the Scientific Research Foundation of Chongqing University of Technology (Grant No: 0102240003) and the Natural Science Foundation of Chongqing, China (Grant No: CSTB2023NSCQ-LZX0048).
\appendix
\section{Proofs}\label{SecProofs}
\subsection{Proof of Lemma \ref{IS}}
\begin{proof}
Since $\Omega_{\mathrm{sum}}=\rho\Pi(\sum_{l\in[L]}B_{l})\Pi'=U\Sigma U'$ and $U'U=I_{K\times K}$, we have
\begin{align*}
\rho\Pi(\sum_{l\in[L]}B_{l})\Pi'=U\Sigma U'\Rightarrow \Pi(\rho\sum_{l\in[L]}B_{l})\Pi'U=U\Sigma U'U=U\Sigma\Rightarrow U=\Pi(\rho\sum_{l\in[L]}B_{l})\Pi'U\Sigma^{-1}.
\end{align*}
Since $\Pi(\mathcal{I},:)=I_{K\times K}$, we have $U(\mathcal{I},:)=(\Pi(\rho\sum_{l\in[L]}B_{l})\Pi'U\Sigma^{-1})(\mathcal{I},:)=\Pi(\mathcal{I},:)(\rho\sum_{l\in[L]}B_{l})\Pi'U\Sigma^{-1}=(\rho\sum_{l\in[L]}B_{l})\Pi'U\Sigma^{-1}$, i.e., $U(\mathcal{I},:)=(\rho\sum_{l\in[L]}B_{l})\Pi'U\Sigma^{-1}$. Therefore, $U=\Pi U(\mathcal{I},:)$ holds. Similarly, we have $V=\Pi V(\mathcal{I},:)$.
\end{proof}
\subsection{Proof of Theorem \ref{mainSPSum}}
\begin{proof}
The following lemma bounds $\|A_{\mathrm{sum}}-\Omega_{\mathrm{sum}}\|_{\infty}$.
\begin{lem}\label{boundAumInfinity}
Under $\mathrm{MLMMSB}(\Pi,\rho,\mathcal{B})$, when Assumption \ref{Assum1} holds, with probability at least $1-o(\frac{1}{n+L})$, we have
\begin{align*}
\|A_{\mathrm{sum}}-\Omega_{\mathrm{sum}}\|_{\infty}=O(\sqrt{\rho nL\mathrm{log}(n+L)}).
\end{align*}
\end{lem}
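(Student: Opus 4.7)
The plan is to reduce the row-sum norm to an entrywise problem and then aggregate. Write $Y_{ij}=\sum_{l\in[L]}(A_{l}(i,j)-\Omega_{l}(i,j))$, so that $A_{\mathrm{sum}}-\Omega_{\mathrm{sum}}$ has entries $Y_{ij}$ and $\|A_{\mathrm{sum}}-\Omega_{\mathrm{sum}}\|_{\infty}=\max_{i\in[n]}\sum_{j\in[n]}|Y_{ij}|$. Each $Y_{ij}$ is a centered sum of $L$ independent Bernoulli deviations, each lying in $[-1,1]$ with variance at most $\rho$, so by the definition of $\tau$ in Assumption \ref{Assum1} one has $|Y_{ij}|\leq\tau$ and $\mathrm{Var}(Y_{ij})\leq\rho L$. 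The randomness is independent across layers $l$ for fixed $(i,j)$, and independent across columns $j$ within a fixed row $i$ (all $A_{l}(i,j)$'s being independent Bernoulli), which is what makes the aggregation step possible.

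I would first apply Bernstein's inequality entrywise:
\begin{align*}
\mathbb{P}(|Y_{ij}|>t)\leq 2\exp\!\Bigl(-\tfrac{t^{2}/2}{\rho L+\tau t/3}\Bigr).
\end{align*}
Choosing $t=C\sqrt{\rho L\log(n+L)}$ for $C$ suitably large and invoking Assumption \ref{Assum1} (which ensures $\tau^{2}\log(n+L)\leq\rho nL$, placing $t$ in the variance-dominated regime) yields a per-entry failure probability of at most $(n+L)^{-3}$. A union bound over the $n^{2}$ pairs then gives $\max_{i,j}|Y_{ij}|=O(\sqrt{\rho L\log(n+L)})$ with probability $1-o(1/(n+L))$. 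Separately, a second Bernstein argument on the non-negative variables $Y_{ij}^{2}\leq\tau^{2}$ with $\mathbb{E}[Y_{ij}^{2}]\leq\rho L$ controls the row-squared sum: $\sum_{j}Y_{ij}^{2}=O(\rho nL)$ uniformly in $i$, again up to a $\log(n+L)$ union-bound cost, using Assumption \ref{Assum1} to absorb the $\tau^{2}\log(n+L)$ Bernstein remainder into $\rho nL$.

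The main obstacle is the aggregation from these two ingredients to the claimed row-sum rate. Applying Cauchy-Schwarz to the row-squared bound only gives $\sum_{j}|Y_{ij}|\leq\sqrt{n\sum_{j}Y_{ij}^{2}}=O(n\sqrt{\rho L\log(n+L)})$, which exceeds the stated $O(\sqrt{\rho nL\log(n+L)})$ by a factor of $\sqrt{n}$. Since $\mathbb{E}|Y_{ij}|\asymp\sqrt{\rho L}$ one already has $\mathbb{E}[\sum_{j}|Y_{ij}|]\asymp n\sqrt{\rho L}$, so no sharper tail argument for $|Y_{ij}|$ can close the gap. The natural reconciliation is that what the proof actually supplies is a row-wise $\ell_{2}$ bound: the Bernstein control on $\sum_{j}Y_{ij}^{2}$ directly gives $\max_{i}\|(A_{\mathrm{sum}}-\Omega_{\mathrm{sum}})(i,:)\|_{2}=O(\sqrt{\rho nL\log(n+L)})$, which matches the stated scaling and is the version consistent with how this bound propagates into the row-wise perturbation analysis of $\hat{U}$ in Theorem \ref{mainSPSum}. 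I would therefore structure the argument around the row-squared concentration as the workhorse, with the entrywise Bernstein step as a preliminary used to handle the $\tau t/3$ term in the Bernstein denominator.
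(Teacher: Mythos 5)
Your obstruction is sound, and it deserves to be stated more forcefully: the lemma as written cannot hold at the claimed rate, and the paper's own proof does not actually establish it. Writing $E:=A_{\mathrm{sum}}-\Omega_{\mathrm{sum}}$ and $Y_{ij}$ for its entries, your expectation computation is exactly right: $\mathbb{E}|Y_{ij}|\asymp\sqrt{\rho L}$ when $\rho L\gtrsim1$ (and $\asymp\rho L$ in the ultra-sparse case), so $\max_{i}\sum_{j}|Y_{ij}|$ concentrates at order $n\sqrt{\rho L}$ (resp. $n\rho L$), exceeding $\sqrt{\rho nL\,\mathrm{log}(n+L)}$ by a factor $\sqrt{n/\mathrm{log}(n+L)}$ (resp. $\sqrt{\rho nL/\mathrm{log}(n+L)}\geq\tau\geq1$ under Assumption \ref{Assum1}). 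The paper reaches the stated bound only through an illegitimate step: Bernstein is applied to $T_{(i)}=\sum_{j}y_{(ij)}x(j)$ for a \emph{fixed} $x\in\{-1,1\}^{n}$ (with $\|x\|_{F}^{2}=n$, $\|x\|_{\infty}=1$), and then the identity $\|E\|_{\infty}=\mathrm{max}_{i}T_{(i)}$ silently substitutes the data-dependent, row-dependent choice $x(j)=\mathrm{sign}(y_{(ij)})$. Legitimizing that substitution requires a union bound over all $2^{n}$ sign vectors, which inflates the deviation level to order $n\sqrt{\rho L}$ — the true magnitude. For fixed $x$ the paper's argument controls only signed row sums such as $\mathrm{max}_{i}|\sum_{j}y_{(ij)}|$ (degree-type deviations), not the $\ell_{1}$ row norm. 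In short, you have found a genuine error in the paper rather than failed to reproduce its proof.

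Two corrections to your own write-up, however. First, your preliminary entrywise Bernstein step is redundant ($|Y_{ij}|\leq\tau$ holds by the very definition of $\tau$), and the claim that Assumption \ref{Assum1} places $t=C\sqrt{\rho L\,\mathrm{log}(n+L)}$ in the variance-dominated regime is wrong: the assumption only gives $\tau\lesssim\sqrt{\rho nL/\mathrm{log}(n+L)}$, a factor $\sqrt{n}$ too weak for that. Second, and more importantly, your proposed repair $\|E\|_{2\rightarrow\infty}=O(\sqrt{\rho nL})$ is correct (your row-squared Bernstein argument does absorb the $\tau^{2}\mathrm{log}(n+L)$ remainder under Assumption \ref{Assum1}), but it does not "match the stated scaling" in the sense needed downstream: in the Cape-style analysis \citep{cape2019the} the lemma enters through the product $\|E\|_{\infty}\|U\|_{2\rightarrow\infty}\asymp\|E\|_{\infty}/\sqrt{n}$, so replacing it by $\|EU\|_{2\rightarrow\infty}\leq\|E\|_{2\rightarrow\infty}$ costs a factor $\sqrt{n/\mathrm{log}(n+L)}$ and degrades the per-node rate of Theorem \ref{mainSPSum} to $O(1/\sqrt{\rho L})$, vacuous unless $\rho L\rightarrow\infty$. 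To recover the stated rate one must bound $\|EU\|_{2\rightarrow\infty}$ directly — Bernstein applied to $\sum_{j}Y_{ij}U(j,k)$ against the \emph{deterministic} columns of $U$, using $\|U(:,k)\|_{2}=1$ and $\|U\|_{2\rightarrow\infty}=O(n^{-1/2})$, gives $O(\sqrt{\rho L\,\mathrm{log}(n+L)})$ under Assumption \ref{Assum1} — or invoke a leave-one-out result, which is precisely what the paper's alternative proof of Theorem \ref{mainSPSum} via Theorem 4.2 of \citep{chen2021spectral} does; it is that alternative argument, not Lemma \ref{boundAumInfinity}, that actually sustains the theorem's rate.
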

\begin{proof}
Recall that $\|A_{\mathrm{sum}}-\Omega_{\mathrm{sum}}\|_{\infty}=\mathrm{max}_{i\in[n]}\sum_{j\in[n]}|A_{\mathrm{sum}}(i,j)-\Omega_{\mathrm{sum}}(i,j)|=\mathrm{max}_{i\in[n]}\sum_{j\in[n]}|\sum_{l\in[L]}(A_{l}(i,j)-\Omega_{l}(i,j))|$, next we bound it by using the Bernstein inequality below.
\begin{thm}\label{Bern}
(Theorem 1.4 of \cite{tropp2012user}) Let $\{X_{i}\}$ be independent, random, self-adjoint matrices with dimension $d$. When $\mathbb{E}[X_{i}]=0$ and $\mathrm{and~}\|X_{i}\|\leq R$ almost surely. Then, for all $t\geq 0$,
\begin{align*}
\mathbb{P}(\|\sum_{i}X_{i}\|\geq t)\leq d\cdot \mathrm{exp}(\frac{-t^{2}/2}{\sigma^{2}+Rt/3}),
\end{align*}
where $\sigma^{2}:=\|\sum_{i}\mathbb{E}[X^{2}_{i}]\|$ and $\|\cdot\|$ denotes spectral norm.
\end{thm}

Let $x$ be any $n\times1$ vector. Set $y_{(ij)}=\sum_{l\in[L]}(A_{l}(i,j)-\Omega_{l}(i,j))$ for $i\in[n],j\in[n]$, and $T_{(i)}=\sum_{j\in[n]}y_{(ij)}x(j)$ for $i\in[n]$, we have:
\begin{itemize}
  \item $\mathbb{E}(y_{(ij)}x(j))=0$ for $i\in[n], j\in[n]$.
  \item $|y_{(ij)}x(j)|\leq \tau\mathrm{max}_{j}|x(j)|=\tau\|x\|_{\infty}$ for $i\in[n], j\in[n]$.
  \item Let $\mathrm{Var}(X)$ denote the variance of any random variable $X$. Combine  $A_{l}\in\{0,1\}^{n\times n}$ for $l\in[L]$ with the fact that  $A_{l}(i,m)$ and $A_{l}(m,j)$ are independent when $j\neq i$, we have
\begin{align*}
\sum_{j\in[n]}\mathbb{E}[y^{2}_{(ij)}x^{2}(j)]&=\sum_{j\in[n]}x^{2}(j)\mathbb{E}[y^{2}_{(ij)}]=\sum_{j\in[n]}x^{2}(j)\mathrm{Var}(y_{(ij)})=\sum_{j\in[n]}x^{2}(j)\sum_{l\in[L]}\mathrm{Var}(A_{l}(i,j)-\Omega_{l}(i,j))\\
&=\sum_{j\in[n]}x^{2}(j)\sum_{l\in[L]}\mathrm{Var}(A_{l}(i,j))=\sum_{j\in[n]}x^{2}(j)\sum_{l\in[L]}\Omega_{l}(i,j)(1-\Omega_{l}(i,j))\leq\sum_{j\in[n]}x^{2}(j)\sum_{l\in[L]}\Omega_{l}(i,j)\\
&\leq \sum_{j\in[n]}x^{2}(j)\sum_{l\in[L]}\rho=\rho L\|x\|^{2}_{F}
\end{align*}
\end{itemize}
By Theorem \ref{Bern}, for any $t\geq0$, we have
\begin{align*}
\mathbb{P}(|T_{(i)}|\geq t)\leq\mathrm{exp}(\frac{-t^{2}/2}{\rho L\|x\|^{2}_{F}+\frac{\tau\|x\|_{\infty}t}{3}}).
\end{align*}
Set $t=\frac{\alpha+1+\sqrt{(\alpha+1)(\alpha+19)}}{3}\sqrt{\rho L\|x\|^{2}_{F}\mathrm{log}(n+L)}$ for any $\alpha\geq0$, if $\rho L\|x\|^{2}_{F}\geq\tau^{2}\|x\|^{2}_{\infty}\mathrm{log}(n+L)$, we have
\begin{align*}
\mathbb{P}(|T_{(i)}|\geq t)\leq\mathrm{exp}(-(\alpha+1)\mathrm{log}(n+L)\frac{1}{\frac{18}{(\sqrt{\alpha+1}+\sqrt{\alpha+19})^{2}}+\frac{2\sqrt{\alpha+1}}{\sqrt{\alpha+1}+\sqrt{\alpha+19}}\sqrt{\frac{\tau^{2}\|x\|^{2}_{\infty}\mathrm{log}(n+L)}{\rho L\|x\|^{2}_{F}}}})\leq\frac{1}{(n+L)^{\alpha+1}}.
\end{align*}
Set $x\in\{-1,1\}^{n\times1}$, we have: when $\rho nL\geq\tau^{2}\mathrm{log}(n+L)$, with probability at least $1-o(\frac{1}{(n+L)^{\alpha+1}})$ for any $\alpha\geq0$,
\begin{align*}
T_{(i)}\leq \frac{\alpha+1+\sqrt{(\alpha+1)(\alpha+19)}}{3}\sqrt{\rho nL\mathrm{log}(n+L)}.
\end{align*}
Set $\alpha=1$, when $\rho nL\geq\tau^{2}\mathrm{log}(n+L)$, with probability at least $1-o(\frac{1}{n+L})$, we have
\begin{align*}
\|A_{\mathrm{sum}}-\Omega_{\mathrm{sum}}\|_{\infty}=\mathrm{max}_{i\in[n]}T_{(i)}=O(\sqrt{\rho nL\mathrm{log}(n+L)}).
\end{align*}
\end{proof}
By Theorem 4.2 of \citep{cape2019the}, when $|\lambda_{K}(\Omega_{\mathrm{sum}})|\geq 4\|A_{\mathrm{sum}}-\Omega_{\mathrm{sum}}\|_{\infty}$, there is an orthogonal matrix $\mathcal{O}$ such that
\begin{align*}
\|\hat{U}-U\mathcal{O}\|_{2\rightarrow\infty}\leq14\frac{\|A_{\mathrm{sum}}-\Omega_{\mathrm{sum}}\|_{\infty}\|U\|_{2\rightarrow\infty}}{|\lambda_{K}(\Omega_{\mathrm{sum}})|}.
\end{align*}
Since $\varpi:=\|\hat{U}\hat{U}'-UU'\|_{2\rightarrow\infty}\leq2\|\hat{U}-U\mathcal{O}\|_{2\rightarrow\infty}$, we get
\begin{align*}
\varpi\leq28\frac{\|A_{\mathrm{sum}}-\Omega_{\mathrm{sum}}\|_{\infty}\|U\|_{2\rightarrow\infty}}{|\lambda_{K}(\Omega_{\mathrm{sum}})|}.
\end{align*}
Since $\|U\|_{2\rightarrow\infty}=O(\sqrt{\frac{K}{n}})=O(\sqrt{\frac{1}{n}})$ by Lemma 3.1 of \citep{mao2021estimating} and Condition \ref{condition}, we get
\begin{align*}
\varpi=O(\frac{\|A_{\mathrm{sum}}-\Omega_{\mathrm{sum}}\|_{\infty}}{|\lambda_{K}(\Omega_{\mathrm{sum}})|\sqrt{n}}).
\end{align*}
For $|\lambda_{K}(\Omega_{\mathrm{sum}})|$, by Condition \ref{condition} and Assumption \ref{Assum22}, we have
\begin{align*}
|\lambda_{K}(\Omega_{\mathrm{sum}})|=|\lambda_{K}(\rho\Pi(\sum_{l\in[L]}B_{l})\Pi')|=\rho|\lambda_{K}(\Pi(\sum_{l\in[L]}B_{l})\Pi')|=\rho|\lambda_{K}(\Pi'\Pi\sum_{l\in[L]}B_{l})|\geq\rho\lambda_{K}(\Pi'\Pi)|\lambda_{K}(\sum_{l\in[L]}B_{l})|=O(\rho\frac{n}{K}L)=O(\rho nL),
\end{align*}
which gives that
\begin{align*}
\varpi=O(\frac{\|A_{\mathrm{sum}}-\Omega_{\mathrm{sum}}\|_{\infty}}{\rho n^{1.5}L}).
\end{align*}
By the proof of Theorem 3.2 of \citep{mao2021estimating}, there is a $K$-by-$K$ permutation matrix $\mathcal{P}$ such that,
\begin{align*}
\mathrm{max}_{i\in[n]}\|e'_{i}(\hat{\Pi}-\Pi\mathcal{P})\|_{1}=O(\varpi\kappa(\Pi'\Pi)\sqrt{\lambda_{1}(\Pi'\Pi)})=O(\varpi\sqrt{\frac{n}{K}})=O(\varpi\sqrt{n})=O(\frac{\|A_{\mathrm{sum}}-\Omega_{\mathrm{sum}}\|_{\infty}}{\rho nL}).
\end{align*}
By Lemma \ref{boundAumInfinity}, this theorem holds. Finally, recall that when we use Theorem 4.2 of \citep{cape2019the}, we require $|\lambda_{K}(\Omega_{\mathrm{sum}})|\geq4\|A_{\mathrm{sum}}-\Omega_{\mathrm{sum}}\|_{\infty}$. Since $|\lambda_{K}(\Omega_{\mathrm{sum}})|=O(\rho nL)$, it is easy to see that as long as $\rho nL\gg\|A_{\mathrm{sum}}-\Omega_{\mathrm{sum}}\|_{\infty}$, this requirement holds. The condition $\rho nL\gg\|A_{\mathrm{sum}}-\Omega_{\mathrm{sum}}\|_{\infty}$ holds naturally because we need the row-wise error bound $O(\frac{\|A_{\mathrm{sum}}-\Omega_{\mathrm{sum}}\|_{\infty}}{\rho nL})$ to be much smaller than 1.
\end{proof}
\subsection{An alternative proof of Theorem \ref{mainSPSum}}
Here, we provide an alternative proof of Theorem \ref{mainSPSum} by using Theorem 4.2. of \cite{chen2021spectral}.
\begin{proof}
Set $H_{\hat{U}}=\hat{U}'U$ and let $H_{\hat{U}}=U_{H_{\hat{U}}}\Sigma_{H_{\hat{U}}}V'_{H_{\hat{U}}}$ be its top $K$ singular value decomposition. Define $\mathrm{sgn}(H_{\hat{U}})$ as $U_{H_{\hat{U}}}V'_{H_{\hat{U}}}$. Under $\mathrm{MLMMSB}(\Pi,\rho,\mathcal{B})$, the following results are true.
\begin{itemize}
  \item $\mathbb{E}[\sum_{l\in[L]}(A_{l}(i,j)-\Omega_{l}(i,j))]=0$ for $i\in[n],j\in[n]$.
  \item $\mathbb{E}[(\sum_{l\in[L]}(A_{l}(i,j)-\Omega_{l}(i,j)))^{2}]=\mathbb{E}[\sum_{l\in[L]}(A_{l}(i,j)-\Omega_{l}(i,j))^{2}+\sum_{l_{1}\neq l_{2}, l_{1}\in[L],l_{2}\in[L]}(A_{l_{1}}(i,j)-\Omega_{l_{1}}(i,j))(A_{l_{2}}(i,j)-\Omega_{l_{2}}(i,j))]=\sum_{l\in[L]}\mathbb{E}[(A_{l}(i,j)-\Omega_{l}(i,j))^{2}]=\sum_{l\in[L]}\Omega_{l}(i,j)(1-\Omega_{l}(i,j))\leq\sum_{l\in[L]}\Omega_{l}(i,j)\leq\sum_{l\in[L]}\rho=\rho L$ for $i\in[n],j\in[n]$.
  \item $|A_{\mathrm{sum}}(i,j)-\Omega_{\mathrm{sum}}(i,j)|=|\sum_{l\in[L]}(A_{l}(i,j)-\Omega_{l}(i,j))|\leq \tau$ for $i\in[n],j\in[n]$.
  \item Set $\mu=\frac{n\|U\|^{2}_{2\rightarrow\infty}}{K}$. By Lemma 3.1 of \citep{mao2021estimating}, we have $\frac{1}{K\lambda_{1}(\Pi'\Pi)}\leq\|U\|^{2}_{2\rightarrow\infty}\leq\frac{1}{\lambda_{K}(\Pi'\Pi)}$, which gives that $\mu=O(1)$ by Condition \ref{condition}.
  \item Set $c_{b}=\frac{\tau}{\sqrt{\rho L n/(\mu \mathrm{log}(n))}}$. $\mu=O(1)$ gives $c_{b}=O(\sqrt{\frac{\tau^{2}\mathrm{log}(n)}{\rho nL}})\leq O(1)$ by Assumption \ref{Assum1}.
\end{itemize}
The above results ensure that conditions in Assumption 4.1. \cite{chen2021spectral} are satisfied. Then, by Theorem 4.2. \cite{chen2021spectral}, when $|\lambda_{K}(\Omega_{\mathrm{sum}})|\gg\sqrt{\rho n L\mathrm{log}(n)}$, with probability at least $1-O(\frac{1}{n^{5}})$, we have
\begin{align*}
\|\hat{U}\mathrm{sgn}(H_{\hat{U}})-U\|_{2\rightarrow\infty}=O(\frac{\kappa(\Omega_{\mathrm{sum}})\sqrt{\rho L\mu K}+\sqrt{K\rho L\mathrm{log}(n)}}{|\lambda_{K}(\Omega_{\mathrm{sum}})|}).
\end{align*}
Since $\mu=O(1)$ and $K=O(1)$, we have
\begin{align*}
\|\hat{U}\mathrm{sgn}(H_{\hat{U}})-U\|_{2\rightarrow\infty}=O(\frac{\kappa(\Omega_{\mathrm{sum}})\sqrt{\rho L}+\sqrt{\rho L\mathrm{log}(n)}}{|\lambda_{K}(\Omega_{\mathrm{sum}})|}).
\end{align*}
By Assumption \ref{Assum11} and Condition \ref{condition}, we have  $|\lambda_{K}(\Omega_{\mathrm{sum}})|=O(\rho nL)$ and $|\lambda_{1}(\Omega_{\mathrm{sum}})|=\|\Omega_{\mathrm{sum}}\|=\rho\|\Pi(\sum_{l\in[L]}B_{l})\Pi'\|\leq\rho\|\Pi'\Pi\|\|\sum_{l\in[L]}B_{l}\|=O(\frac{\rho n L}{K})=O(\rho nL)$, which gives that $\kappa(\Omega_{\mathrm{sum}})=O(1)$ and
\begin{align*}
\|\hat{U}\mathrm{sgn}(H_{\hat{U}})-U\|_{2\rightarrow\infty}=O(\frac{(\sqrt{\rho L}+\sqrt{\rho L\mathrm{log}(n)}}{\rho nL})=O(\frac{1}{n}\sqrt{\frac{\mathrm{log}(n)}{\rho L}}).
\end{align*}
Since $\varpi=\|\hat{U}\hat{U}'-UU'\|_{2\rightarrow\infty}\leq2\|U-\hat{U}\mathrm{sgn}(H_{\hat{U}})\|_{2\rightarrow\infty}$, we have
\begin{align*}	
\varpi=O(\frac{1}{n}\sqrt{\frac{\mathrm{log}(n)}{\rho L}}).
\end{align*}
By the proof of Theorem 3.2 in \citep{mao2021estimating}, there is a permutation matrix $\mathcal{P}$ such that,
\begin{align*}
\mathrm{max}_{i\in[n]}\|e'_{i}(\hat{\Pi}-\Pi\mathcal{P})\|_{1}=O(\varpi\kappa(\Pi'\Pi)\sqrt{\lambda_{1}(\Pi'\Pi)})=O(\varpi\sqrt{\frac{n}{K}})=O(\varpi\sqrt{n})=O(\sqrt{\frac{\mathrm{log}(n)}{\rho nL}}).
\end{align*}
Recall that $|\lambda_{K}(\Omega_{\mathrm{sum}})|\geq O(\rho nL)$ and we need $|\lambda_{K}(\Omega_{\mathrm{sum}})|\gg\sqrt{\rho nL\mathrm{log}(n)}$ hold. It is easy to see that as long as $O(\rho nL)\gg\sqrt{\rho nL\mathrm{log}(n)}\Leftrightarrow\rho nL\gg\mathrm{log}(n)$, $|\lambda_{K}(\Omega_{\mathrm{sum}})|\gg\sqrt{\rho nL\mathrm{log}(n)}$ always holds. The condition $\rho nL\gg\mathrm{log}(n)$ holds naturally since we require the row-wise error bound $O(\sqrt{\frac{\mathrm{log}(n)}{\rho nL}})$ to be much smaller than 1.
\end{proof}
\subsection{Proof of Theorem \ref{mainSPDSoS}}
\begin{proof}
Since $\mathbb{E}[S_{\mathrm{sum}}]\neq \tilde{S}_{\mathrm{sum}}$, i.e., $\mathbb{E}[S_{\mathrm{sum}}(i,j)-\tilde{S}_{\mathrm{sum}}(i,j)]\neq0$ for $i\in[n],j\in[n]$, we can not use Theorem 4.2 in \citep{chen2021spectral} to obtain the row-wise eigenspace error $\|\hat{V}\hat{V}'-VV'\|_{2\rightarrow\infty}$ for $S_{\mathrm{sum}}$ and $\tilde{S}_{\mathrm{sum}}$. To obtain SPDSoS's error rate, we use Theorem 4.2 in\citep{cape2019the}. By Bernstein inequality, we have the following lemma that bounds $\|S_{\mathrm{sum}}-\tilde{S}_{\mathrm{sum}}\|_{\infty}$.
\begin{lem}\label{boundSumInfinity}
Under $\mathrm{MLMMSB}(\Pi,\rho,\mathcal{B})$, when Assumption \ref{Assum2} holds, with probability at least $1-o(\frac{1}{n+L})$, we have
\begin{align*}
\|S_{\mathrm{sum}}-\tilde{S}_{\mathrm{sum}}\|_{\infty}=O(\sqrt{\rho^{2}n^{2}L\mathrm{log}(n+L)})+O(\rho^{2}nL).
\end{align*}
\end{lem}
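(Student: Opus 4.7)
The plan is to follow the Bernstein template of Lemma~\ref{boundAumInfinity}, adapted to the fact that the entries of $S_{\mathrm{sum}}-\tilde{S}_{\mathrm{sum}}$ are quadratic in the Bernoulli variables and carry a deterministic diagonal bias. The decomposition $(S_{\mathrm{sum}}-\tilde{S}_{\mathrm{sum}})(i,j)=\sum_{l}\sum_{m}\bigl(A_l(i,m)A_l(m,j)-\Omega_l(i,m)\Omega_l(m,j)\bigr)$ for $i\neq j$ plus a separate diagonal piece will be the organising identity.

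First I would isolate the diagonal. Since $A_l(i,m)\in\{0,1\}$ implies $A_l(i,m)^2=A_l(i,m)$, the debiasing exactly cancels the diagonal of $A_l^2$, so $S_{\mathrm{sum}}(i,i)=0$ almost surely. Hence $(S_{\mathrm{sum}}-\tilde{S}_{\mathrm{sum}})(i,i)=-\tilde{S}_{\mathrm{sum}}(i,i)=-\sum_{l}\sum_{m}\Omega_l(i,m)^2$, a deterministic quantity bounded by $\rho^2 nL$ because $\Omega_l(i,m)\leq\rho$. This single entry contributes exactly the additive $O(\rho^2 nL)$ piece of the claimed bound and has no probabilistic content.

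For the off-diagonal entries I would mimic the proof of Lemma~\ref{boundAumInfinity}. Fix $x\in\{-1,1\}^n$ and consider $T_{(i)}=\sum_{j\neq i}x(j)\sum_{l,m}\bigl(A_l(i,m)A_l(m,j)-\Omega_l(i,m)\Omega_l(m,j)\bigr)$, a mean-zero sum. The definition of $\tilde{\tau}$ in Assumption~\ref{Assum2} provides the almost-sure envelope $|(S_{\mathrm{sum}}-\tilde{S}_{\mathrm{sum}})(i,j)|\leq\tilde{\tau}$, playing the role of $\tau$ in Lemma~\ref{boundAumInfinity}. For the variance proxy I would use $\mathrm{Var}(A_l(i,m)A_l(m,j))\leq\mathbb{E}[A_l(i,m)A_l(m,j)]=\Omega_l(i,m)\Omega_l(m,j)\leq\rho^2$, valid because $A_l(i,m)$ and $A_l(m,j)$ are independent entries whenever $i\neq j$, and sum over $l\in[L]$, $m\in[n]$, and $j\neq i$ weighted by $x(j)^2\leq 1$ to obtain a variance proxy of order $\rho^2 n^2 L$. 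Calibrating Theorem~\ref{Bern} at $t=\Theta(\sqrt{\rho^2 n^2 L\log(n+L)})$ and invoking Assumption~\ref{Assum2} to keep the sub-Gaussian regime active yields $\max_i|T_{(i)}|=O(\sqrt{\rho^2 n^2 L\log(n+L)})$ with probability $1-o(1/(n+L))$ after a union bound over $i$. Choosing $x$ along the sign pattern of the off-diagonal of row $i$ converts the $T_{(i)}$ bound into the row $\ell_1$ norm, and adding back the deterministic diagonal piece produces the desired $O(\sqrt{\rho^2 n^2 L\log(n+L)})+O(\rho^2 nL)$.

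The main obstacle is that the summands $A_l(i,m)A_l(m,j)$ indexed by $m$ are not jointly independent: the cases $m=i$ and $m=j$ both reference the shared edge $A_l(i,j)$, which also appears in other summands (through $A_l(m',j)$ when $m'=i$, or $A_l(i,m')$ when $m'=j$). I would handle this exactly as in the debiasing analysis of \citep{lei2023bias} by splitting $\sum_m=\sum_{m\notin\{i,j\}}+\sum_{m\in\{i,j\}}$. For $m\notin\{i,j\}$ the products $A_l(i,m)A_l(m,j)$ involve disjoint edge pairs across distinct $m$ for fixed $(l,i,j)$, and across different $l$ the entries are independent by the model, so the clean scalar Bernstein calculation above applies verbatim. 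The two boundary contributions per $(i,j,l)$ are each bounded by $A_l(i,j)\bigl(A_l(i,i)+A_l(j,j)\bigr)\leq 2$, so summed over $l$ their variance is at most $O(\rho^2 L)$ per entry, and a separate Bernstein estimate shows their total contribution to $\|S_{\mathrm{sum}}-\tilde{S}_{\mathrm{sum}}\|_\infty$ is of strictly lower order than $\sqrt{\rho^2 n^2 L\log(n+L)}+\rho^2 nL$ under Assumption~\ref{Assum2}. This decoupling of the boundary edges from the bulk is the one genuinely new step beyond the proof of Lemma~\ref{boundAumInfinity}.
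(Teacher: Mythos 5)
Your proposal follows essentially the same route as the paper's proof: the diagonal of $S_{\mathrm{sum}}-\tilde{S}_{\mathrm{sum}}$ is isolated as the deterministic quantity $-\sum_{l}\sum_{m}\Omega_{l}^{2}(i,m)$, bounded by $\rho^{2}nL$, and the off-diagonal row sums are controlled via the scalar Bernstein inequality with envelope $\tilde{\tau}$ and variance proxy of order $\rho^{2}n^{2}L$, calibrated at $t=\Theta(\sqrt{\rho^{2}n^{2}L\log(n+L)})$ and converted to a row $\ell_{1}$ bound through a sign vector, exactly as the paper does. Your additional handling of the boundary terms $m\in\{i,j\}$, which share the edge $A_{l}(i,j)$ and so are not independent of the bulk, is a refinement the paper passes over silently, but it does not alter the order of the variance proxy or the final bound.
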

\begin{proof}
Since  $A_{l}\in\{0,1\}^{n\times n}$, we have $A^{2}_{l}(i,j)=A_{l}(i,j)$ for $i\in[n],j\in[n],l\in[L]$, which gives that
\begin{align*}
 \|S_{\mathrm{sum}}-\tilde{S}_{\mathrm{sum}}\|_{\infty}&=\mathrm{max}_{i\in[n]}\sum_{j\in[n]}|S_{\mathrm{sum}}(i,j)-\tilde{S}_{\mathrm{sum}}(i,j)|=\mathrm{max}_{i\in[n]}\sum_{j\in[n]}|\sum_{l\in[L]}(A^{2}_{l}-D_{l}-\Omega^{2}_{l})(i,j)|\\
 &=\mathrm{max}_{i\in[n]}\sum_{j\in[n]}|\sum_{l\in[L]}\sum_{m\in[n]}(A_{l}(i,m)A_{l}(m,j)-\Omega_{l}(i,m)\Omega_{l}(m,j))-\sum_{l\in[L]}D_{l}(i,j)|\\
 &=\mathrm{max}_{i\in[n]}(\sum_{j\neq i,j\in[n]}|\sum_{l\in[L]}\sum_{m\in[n]}(A_{l}(i,m)A_{l}(m,j)-\Omega_{l}(i,m)\Omega_{l}(m,j))|+|\sum_{l\in[L]}\sum_{m\in[n]}(A^{2}_{l}(i,m)-\Omega^{2}_{l}(i,m))-\sum_{l\in[L]}D_{l}(i,i)|)\\
 &=\mathrm{max}_{i\in[n]}(\sum_{j\neq i,j\in[n]}|\sum_{l\in[L]}\sum_{m\in[n]}(A_{l}(i,m)A_{l}(m,j)-\Omega_{l}(i,m)\Omega_{l}(m,j))|+|\sum_{l\in[L]}\sum_{m\in[n]}(A_{l}(i,m)-\Omega^{2}_{l}(i,m))-\sum_{l\in[L]}D_{l}(i,i)|)\\
 &=\mathrm{max}_{i\in[n]}(\sum_{j\neq i,j\in[n]}|\sum_{l\in[L]}\sum_{m\in[n]}(A_{l}(i,m)A_{l}(m,j)-\Omega_{l}(i,m)\Omega_{l}(m,j))|+\sum_{l\in[L]}\sum_{m\in[n]}\Omega^{2}_{l}(i,m))\\
 &\leq \mathrm{max}_{i\in[n]}(\sum_{j\neq i,j\in[n]}|\sum_{l\in[L]}\sum_{m\in[n]}(A_{l}(i,m)A_{l}(m,j)-\Omega_{l}(i,m)\Omega_{l}(m,j))|+\sum_{l\in[L]}\sum_{m\in[n]}\rho^{2}\\
 &=\mathrm{max}_{i\in[n]}(\sum_{j\neq i,j\in[n]}|\sum_{l\in[L]}\sum_{m\in[n]}(A_{l}(i,m)A_{l}(m,j)-\Omega_{l}(i,m)\Omega_{l}(m,j))|+\rho^{2}nL.
\end{align*}
Next, we bound $\sum_{j\neq i,j\in[n]}|\sum_{l\in[L]}\sum_{m\in[n]}(A_{l}(i,m)A_{l}(m,j)-\Omega_{l}(i,m)\Omega_{l}(m,j))|$ for $i\in[n]$. Let $\tilde{x}$ be any $(n-1)\times1$ vector. Set $\tilde{y}_{(ij)}=\sum_{l\in[L]}\sum_{m\in[n]}(A_{l}(i,m)A_{l}(m,j)-\Omega_{l}(i,m)\Omega_{l}(m,j))$ for $i\in[n],j\neq i,j\in[n]$ and $\tilde{T}_{(i)}=\sum_{j\neq i,j\in[n]}\tilde{y}_{(ij)}\tilde{x}(j)$ for $i\in[n]$. The following results hold:
\begin{itemize}
  \item $\mathbb{E}(\tilde{y}_{(ij)}\tilde{x}(j))=0$ since $A_{l}(i,m)$ and $A_{l}(m,j)$ are independent when $j\neq i$ for $i\in[n], j\in[n]$.
  \item $|\tilde{y}_{(ij)}\tilde{x}(j)|\leq \tilde{\tau}\mathrm{max}_{j}|\tilde{x}(j)|=\tilde{\tau}\|\tilde{x}\|_{\infty}$ for $i\in[n], j\in[n]$.
  \item Combine $A_{l}\in\{0,1\}^{n\times n}$ for $l\in[L]$ with the fact that $A_{l}(i,m)$ and $A_{l}(m,j)$ are independent when $j\neq i$, we have
\begin{align*}
\sum_{j\neq i,j\in[n]}\mathbb{E}[\tilde{y}^{2}_{(ij)}\tilde{x}^{2}(j)]&=\sum_{j\neq i,j\in[n]}\tilde{x}^{2}(j)\mathbb{E}[\tilde{y}^{2}_{(ij)}]=\sum_{j\neq i,j\in[n]}\tilde{x}^{2}(j)\mathrm{Var}(\tilde{y}_{(ij)})\\
&=\sum_{j\neq i,j\in[n]}\tilde{x}^{2}(j)\sum_{l\in[L]}\sum_{m\in[n]}\mathrm{Var}(A_{l}(i,m)A_{l}(m,j)-\Omega_{l}(i,m)\Omega_{l}(m,j))=\sum_{j\neq i,j\in[n]}\tilde{x}^{2}(j)\sum_{l\in[L]}\sum_{m\in[n]}\mathrm{Var}(A_{l}(i,m)A_{l}(m,j))\\
&=\sum_{j\neq i,j\in[n]}\tilde{x}^{2}(j)\sum_{l\in[L]}\sum_{m\in[n]}\mathbb{E}[(A_{l}(i,m)A_{l}(m,j)-\Omega_{l}(i,m)\Omega_{l}(m,j))^{2}]\\
&=\sum_{j\neq i,j\in[n]}\tilde{x}^{2}(j)\sum_{l\in[L]}\sum_{m\in[n]}\mathbb{E}[A^{2}_{l}(i,m)A^{2}_{l}(m,j)+\Omega^{2}_{l}(i,m)\Omega^{2}_{l}(m,j)-2A_{l}(i,m)A_{l}(m,j)\Omega_{l}(i,m)\Omega_{l}(m,j)]\\
&=\sum_{j\neq i,j\in[n]}\tilde{x}^{2}(j)\sum_{l\in[L]}\sum_{m\in[n]}(\mathbb{E}[A^{2}_{l}(i,m)A^{2}_{l}(m,j)]-\Omega^{2}_{l}(i,m)\Omega^{2}_{l}(m,j))\\
&=\sum_{j\neq i,j\in[n]}\tilde{x}^{2}(j)\sum_{l\in[L]}\sum_{m\in[n]}(\mathbb{E}[A_{l}(i,m)]\mathbb{E}[A_{l}(m,j)]-\Omega^{2}_{l}(i,m)\Omega^{2}_{l}(m,j))\\
&=\sum_{j\neq i,j\in[n]}\tilde{x}^{2}(j)\sum_{l\in[L]}\sum_{m\in[n]}\Omega_{l}(i,m)\Omega_{l}(m,j)(1-\Omega_{l}(i,m)\Omega_{l}(m,j))\\
&\leq\sum_{j\neq i,j\in[n]}\tilde{x}^{2}(j)\sum_{l\in[L]}\sum_{m\in[n]}\Omega_{l}(i,m)\Omega_{l}(m,j)\leq\sum_{j\neq i,j\in[n]}\tilde{x}^{2}(j)\sum_{l\in[L]}\sum_{m\in[n]}\rho^{2}=\rho^{2}nL\|\tilde{x}\|^{2}_{F}.
\end{align*}
\end{itemize}
By Theorem \ref{Bern}, for any $\tilde{t}\geq0$, we have
\begin{align*}
\mathbb{P}(|\tilde{T}_{(i)}|\geq \tilde{t})\leq\mathrm{exp}(\frac{-\tilde{t}^{2}}{\rho^{2}nL\|\tilde{x}\|^{2}_{F}+\frac{\tilde{\tau}\|\tilde{x}\|_{\infty}\tilde{t}}{3}}).
\end{align*}
Set $\tilde{t}=\frac{\alpha+1+\sqrt{(\alpha+1)(\alpha+19)}}{3}\sqrt{\rho^{2}nL\|\tilde{x}\|^{2}_{F}\mathrm{log}(n+L)}$ for any $\alpha\geq0$. If $\rho^{2}nL\|\tilde{x}\|^{2}_{F}\geq\tilde{\tau}^{2}\|\tilde{x}\|^{2}_{\infty}\mathrm{log}(n+L)$, we have
\begin{align*}
\mathbb{P}(|\tilde{T}_{(i)}|\geq \tilde{t})\leq\mathrm{exp}(-(\alpha+1)\mathrm{log}(n+L)\frac{1}{\frac{18}{(\sqrt{\alpha+1}+\sqrt{\alpha+19})^{2}}+\frac{2\sqrt{\alpha+1}}{\sqrt{\alpha+1}+\sqrt{\alpha+19}}\sqrt{\frac{\tilde{\tau}^{2}\|\tilde{x}\|^{2}_{\infty}\mathrm{log}(n+L)}{\rho^{2}nL\|\tilde{x}\|^{2}_{F}}}})\leq\frac{1}{(n+L)^{\alpha+1}}.
\end{align*}
Recall that $\tilde{x}$ is any $(n-1)\times1$ vector, setting  $\tilde{x}\in\{-1,1\}^{(n-1)\times1}$ gives the following result: when $\rho^{2}n^{2}L\geq\rho^{2}n(n-1)L\geq\tilde{\tau}^{2}\mathrm{log}(n+L)$, with probability at least $1-o(\frac{1}{(n+L)^{\alpha+1}})$ for any $\alpha\geq0$, we have
\begin{align*}
\tilde{T}_{(i)}\leq \frac{\alpha+1+\sqrt{(\alpha+1)(\alpha+19)}}{3}\sqrt{\rho^{2}n(n-1)L\mathrm{log}(n+L)}.
\end{align*}
Set $\alpha=1$, when $\rho^{2}n^{2}L\geq\tilde{\tau}^{2}\mathrm{log}(n+L)$, with probability at least $1-o(\frac{1}{n+L})$, we have
\begin{align*}
\mathrm{max}_{i\in[n]}\tilde{T}_{(i)}=O(\sqrt{\rho^{2}n^{2}L\mathrm{log}(n+L)}).
\end{align*}
Hence, we have
\begin{align*}
\|S_{\mathrm{sum}}-\tilde{S}_{\mathrm{sum}}\|_{\infty}=O(\sqrt{\rho^{2}n^{2}L\mathrm{log}(n+L)})+O(\rho^{2}nL).
\end{align*}
\end{proof}
By Theorem 4.2 of \citep{cape2019the}, if $|\lambda_{K}(\tilde{S}_{\mathrm{sum}})|\geq 4\|S_{\mathrm{sum}}-\tilde{S}_{\mathrm{sum}}\|_{\infty}$, there is an orthogonal matrix $\tilde{\mathcal{O}}$ such that
\begin{align*}
\|\hat{V}-V\tilde{\mathcal{O}}\|_{2\rightarrow\infty}\leq14\frac{\|S_{\mathrm{sum}}-\tilde{S}_{\mathrm{sum}}\|_{\infty}\|V\|_{2\rightarrow\infty}}{|\lambda_{K}(\tilde{S}_{\mathrm{sum}})|}.
\end{align*}
Since $\tilde{\varpi}:=\|\hat{V}\hat{V}'-VV'\|_{2\rightarrow\infty}\leq2\|\hat{V}-V\tilde{\mathcal{O}}\|_{2\rightarrow\infty}$ by basic algebra, we have
\begin{align*}
\tilde{\varpi}\leq28\frac{\|S_{\mathrm{sum}}-\tilde{S}_{\mathrm{sum}}\|_{\infty}\|V\|_{2\rightarrow\infty}}{|\lambda_{K}(\tilde{S}_{\mathrm{sum}})|}.
\end{align*}
Since $\|V\|_{2\rightarrow\infty}=O(\sqrt{\frac{1}{n}})$ by Lemma 3.1 \citep{mao2021estimating} and Condition \ref{condition}, we have
\begin{align*}
\tilde{\varpi}=O(\frac{\|S_{\mathrm{sum}}-\tilde{S}_{\mathrm{sum}}\|_{\infty}}{|\lambda_{K}(\tilde{S}_{\mathrm{sum}})|\sqrt{n}}).
\end{align*}
For $|\lambda_{K}(\tilde{S}_{\mathrm{sum}})|$, by Condition \ref{condition} and Assumption \ref{Assum22}, we have
\begin{align*}
|\lambda_{K}(\tilde{S}_{\mathrm{sum}})|&=\sqrt{\lambda_{K}((\sum_{l\in[L]}\Omega^{2}_{l})^{2})}=\sqrt{\lambda_{K}((\sum_{l\in[L]}\rho^{2}\Pi B_{l}\Pi'\Pi B_{l}\Pi')^{2})}=\rho^{2}\sqrt{\lambda^{2}_{K}(\sum_{l\in[L]}\Pi B_{l}\Pi' \Pi B_{l}\Pi')}\\
&=\rho^{2}\sqrt{\lambda^{2}_{K}(\Pi(\sum_{l\in[L]}B_{l}\Pi'\Pi B_{l})\Pi')}=\rho^{2}\sqrt{\lambda^{2}_{K}(\Pi'\Pi(\sum_{l\in[L]}B_{l}\Pi'\Pi B_{l}))}\geq\rho^{2}\lambda_{K}(\Pi'\Pi)\sqrt{\lambda^{2}_{K}(\sum_{l\in[L]}B_{l}\Pi'\Pi B_{l})}\\
&=O(\rho^{2}\lambda^{2}_{K}(\Pi'\Pi)|\lambda_{K}(\sum_{l\in[L]}B^{2}_{l})|)=O(\rho^{2}n^{2}L),
\end{align*}
which gives that
\begin{align*}
\tilde{\varpi}=O(\frac{\|S_{\mathrm{sum}}-\tilde{S}_{\mathrm{sum}}\|_{\infty}}{\rho^{2}n^{2.5}L}).
\end{align*}
Proof of Theorem 3.2 in \citep{mao2021estimating} gives that, there is a $K\times K$ permutation matrix $\tilde{\mathcal{P}}$ such that,
\begin{align*}
\mathrm{max}_{i\in[n]}\|e'_{i}(\hat{\Pi}-\Pi\tilde{\mathcal{P}})\|_{1}=O(\tilde{\varpi}\kappa(\Pi'\Pi)\sqrt{\lambda_{1}(\Pi'\Pi)})=O(\tilde{\varpi}\sqrt{\frac{n}{K}})=O(\tilde{\varpi}\sqrt{n})=O(\frac{\|S_{\mathrm{sum}}-\tilde{S}_{\mathrm{sum}}\|_{\infty}}{\rho^{2}n^{2}L}).
\end{align*}
By Lemma \ref{boundSumInfinity}, this theorem holds. Finally, since $|\lambda_{K}(\tilde{S}_{\mathrm{sum}})|=O(\rho^{2}n^{2}L)$, the requirement  $|\lambda_{K}(\tilde{S}_{\mathrm{sum}})|\geq4\|S_{\mathrm{sum}}-\tilde{S}_{\mathrm{sum}}\|_{\infty}$ is satisfied as long as $\rho^{2}n^{2}L\gg\|S_{\mathrm{sum}}-\tilde{S}_{\mathrm{sum}}\|_{\infty}$ which holds naturally since we need the row-wise error bound $O(\frac{\|S_{\mathrm{sum}}-\tilde{S}_{\mathrm{sum}}\|_{\infty}}{\rho^{2}n^{2}L})$ to be much smaller than 1.
\end{proof}
\subsection{Proof of Theorem \ref{mainSPSoS}}
\begin{proof}
Lemma \ref{boundSum2Infinity} bounds $\|\sum_{l\in[L]}A^{2}_{l}-\sum_{l\in[L]}\Omega^{2}_{l}\|_{\infty}$.
\begin{lem}\label{boundSum2Infinity}
Under $\mathrm{MLMMSB}(\Pi,\rho,\mathcal{B})$, if $\rho nL\geq\mathrm{log}(n+L)$, with probability at least $1-o(\frac{1}{n+L})$, we have
\begin{align*}
\|\sum_{l\in[L]}A^{2}_{l}-\sum_{l\in[L]}\Omega^{2}_{l}\|_{\infty}=\|S_{\mathrm{sum}}-\tilde{S}_{\mathrm{sum}}\|_{\infty}+O(\rho nL).
\end{align*}
\end{lem}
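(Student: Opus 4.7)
The plan is to reduce the lemma to the already-known bound on $\|S_{\mathrm{sum}}-\tilde{S}_{\mathrm{sum}}\|_\infty$ by peeling off the diagonal correction terms $\sum_{l\in[L]} D_l$ that distinguish $\sum_{l\in[L]} A_l^2$ from $S_{\mathrm{sum}}$, and then showing that this diagonal contribution is at most $O(\rho nL)$ in $\|\cdot\|_\infty$ with high probability via a standard Bernstein-type concentration argument on node degrees.

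Concretely, I would first use the identity $S_{\mathrm{sum}} = \sum_{l\in[L]}(A_l^2 - D_l)$ to write
\begin{align*}
\sum_{l\in[L]} A_l^2 - \sum_{l\in[L]} \Omega_l^2 \;=\; \bigl(S_{\mathrm{sum}} - \tilde{S}_{\mathrm{sum}}\bigr) + \sum_{l\in[L]} D_l,
\end{align*}
and then apply the triangle inequality for $\|\cdot\|_\infty$ to obtain
\begin{align*}
\Bigl\|\sum_{l\in[L]} A_l^2 - \sum_{l\in[L]} \Omega_l^2\Bigr\|_\infty \leq \|S_{\mathrm{sum}} - \tilde{S}_{\mathrm{sum}}\|_\infty + \Bigl\|\sum_{l\in[L]} D_l\Bigr\|_\infty.
\end{align*}
Since $\sum_{l\in[L]} D_l$ is a diagonal matrix with $i$-th entry equal to $\sum_{l\in[L]}\sum_{j\in[n]} A_l(i,j)$, its $\|\cdot\|_\infty$ norm coincides with the maximum total degree $\max_{i\in[n]} \sum_{l\in[L]}\sum_{j\in[n]} A_l(i,j)$.

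Next, I would bound this maximum total degree. For each fixed $i$, the random variable $\sum_{l\in[L]}\sum_{j\in[n]} A_l(i,j)$ is a sum of independent Bernoulli entries (bounded by $1$) whose expectation $\sum_{l\in[L]}\sum_{j\in[n]} \Omega_l(i,j)$ is at most $\rho nL$ under $\mathrm{MLMMSB}(\Pi,\rho,\mathcal{B})$. Applying Bernstein's inequality (Theorem \ref{Bern} specialized to scalar Bernoullis), the deviation from its mean is $O(\sqrt{\rho nL \log(n+L)})$ with probability at least $1-o(1/(n+L)^2)$. Taking a union bound over $i\in[n]$, the failure probability is still $o(1/(n+L))$, and under the hypothesis $\rho nL \geq \log(n+L)$ the deviation term is dominated by the mean, yielding $\|\sum_{l\in[L]} D_l\|_\infty = O(\rho nL)$. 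Combining with the triangle inequality above gives the stated conclusion.

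There is no real obstacle in this proof; the only care required is making sure the sparsity condition $\rho nL \geq \log(n+L)$ is strong enough for the Bernstein fluctuation $\sqrt{\rho nL \log(n+L)}$ to be absorbed into $O(\rho nL)$, which it is, and that the union bound over the $n$ rows still leaves probability $1-o(1/(n+L))$, which follows by invoking Bernstein with a slightly larger constant (e.g., setting $\alpha=2$ in the same format as the proof of Lemma \ref{boundSumInfinity}).
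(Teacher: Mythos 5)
Your proposal is correct and follows essentially the same route as the paper: the same decomposition $\sum_{l}A_l^2-\sum_{l}\Omega_l^2=(S_{\mathrm{sum}}-\tilde{S}_{\mathrm{sum}})+\sum_{l}D_l$, the triangle inequality, and a Bernstein bound on the maximum total degree $\max_{i}\sum_{l}D_l(i,i)$, which is $\rho nL + O(\sqrt{\rho nL\log(n+L)})=O(\rho nL)$ under the stated sparsity condition. Your remark about taking a slightly larger exponent (e.g.\ $\alpha=2$) to absorb the union bound over the $n$ rows is a fair point of care; the paper uses $\alpha=1$ but the argument is otherwise identical.
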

\begin{proof}
Since $\sum_{l\in[L]}A^{2}_{l}-\sum_{l\in[L]}\Omega^{2}_{l}=S_{\mathrm{sum}}-\tilde{S}_{\mathrm{sum}}+\sum_{l\in[L]}D_{l}$ and each $D_{l}$ is a diagonal matrix for $l\in[L]$, we have
\begin{align*}
\|\sum_{l\in[L]}A^{2}_{l}-\sum_{l\in[L]}\Omega^{2}_{l}\|_{\infty}&=\|S_{\mathrm{sum}}-\tilde{S}_{\mathrm{sum}}+\sum_{l\in[L]}D_{l}\|_{\infty}=\mathrm{max}_{i\in[n]}\sum_{j\in[n]}|S_{\mathrm{sum}}(i,j)-\tilde{S}_{\mathrm{sum}}(i,j)+\sum_{l\in[L]}D_{l}(i,j)|\\
&\leq\mathrm{max}_{i\in[n]}\sum_{j\in[n]}|S_{\mathrm{sum}}(i,j)-\tilde{S}_{\mathrm{sum}}(i,j)|+\mathrm{max}_{i\in[n]}\sum_{j\in[n]}|\sum_{l\in[L]}D_{l}(i,j)|\\
&=\|S_{\mathrm{sum}}-\tilde{S}_{\mathrm{sum}}\|_{\infty}+\mathrm{max}_{i\in[n]}\sum_{l\in[L]}\sum_{j\in[n]}D_{l}(i,j)=\|S_{\mathrm{sum}}-\tilde{S}_{\mathrm{sum}}\|_{\infty}+\mathrm{max}_{i\in[n]}\sum_{l\in[L]}D_{l}(i,i).
\end{align*}
Since Lemma \ref{boundSumInfinity} provides an upper bound of $\|S_{\mathrm{sum}}-\tilde{S}_{\mathrm{sum}}\|_{\infty}$, we only need to bound $\mathrm{max}_{i\in[n]}\sum_{l\in[L]}D_{l}(i,i)$. To bound it, we let $W_{(i)}=\sum_{l\in[L]}D_{l}(i,i)-\sum_{l\in[L]}\sum_{j\in[n]}\Omega_{l}(i,j)=\sum_{l\in[L]}\sum_{j\in[n]}(A_{l}(i,j)-\Omega_{l}(i,j))$ for $i\in[n]$, The following results hold:
\begin{itemize}
  \item $\mathbb{E}[(A_{l}(i,j)-\Omega_{l}(i,j))]=0$ for $l\in[L],i\in[n], j\in[n]$.
 \item $|A_{l}(i,j)-\Omega_{l}(i,j)|\leq1$ for $l\in[L],i\in[n], j\in[n]$.
 \item Since $\mathbb{E}[A_{l}(i,j)]=\Omega_{l}(i,j)$ and $\mathrm{Var}(A_{l}(i,j))=\Omega_{l}(i,j)(1-\Omega_{l}(i,j))$ for Bernoulli distribution, we have
  \begin{align*}
         \sum_{l\in[L]}\sum_{j\in[n]}\mathbb{E}[(A_{l}(i,j)-\Omega_{l}(i,j))^{2}]=\sum_{l\in[L]}\sum_{j\in[n]}\mathrm{Var}(A_{l}(i,j))=\sum_{l\in[L]}\sum_{j\in[n]}\Omega_{l}(i,j)(1-\Omega_{l}(i,j))\leq \sum_{l\in[L]}\sum_{j\in[n]}\Omega_{l}(i,j)\leq\sum_{l\in[L]}\sum_{j\in[n]}\rho=\rho nL.
       \end{align*}
\end{itemize}
By Theorem \ref{Bern}, for any $\tilde{\tilde{t}}\geq0$, we have
\begin{align*}
\mathbb{P}(|W_{(i)}|\geq \tilde{\tilde{t}})\leq\mathrm{exp}(\frac{-\tilde{\tilde{t}}^{2}}{\rho nL+\frac{\tilde{\tilde{t}}}{3}}).
\end{align*}
Set $\tilde{\tilde{t}}=\frac{\alpha+1+\sqrt{(\alpha+1)(\alpha+19)}}{3}\sqrt{\rho nL\mathrm{log}(n+L)}$ for any $\alpha\geq0$, if $\rho nL\geq\mathrm{log}(n+L)$, we have
\begin{align*}
\mathbb{P}(|W_{(i)}|\geq \tilde{\tilde{t}})\leq\mathrm{exp}(-(\alpha+1)\mathrm{log}(n+L)\frac{1}{\frac{18}{(\sqrt{\alpha+1}+\sqrt{\alpha+19})^{2}}+\frac{2\sqrt{\alpha+1}}{\sqrt{\alpha+1}+\sqrt{\alpha+19}}\sqrt{\frac{\mathrm{log}(n+L)}{\rho nL}}})\leq\frac{1}{(n+L)^{\alpha+1}}.
\end{align*}
Hence, when $\rho nL\geq\mathrm{log}(n+L)$, with probability at least $1-o(\frac{1}{(n+L)^{\alpha+1}})$, we have
\begin{align*}
|W_{(i)}|=|\sum_{l\in[L]}D_{l}(i,i)-\sum_{l\in[L]}\sum_{j\in[n]}\Omega_{l}(i,j)|\leq \tilde{\tilde{t}}.
\end{align*}
Let $\alpha=1$, when $\rho nL\geq\mathrm{log}(n+L)$, with probability at least $1-o(\frac{1}{n+L})$, we have
\begin{align*}
\mathrm{max}_{i\in[n]}|W_{(i)}|\leq\frac{2+2\sqrt{10}}{3}\sqrt{\rho nL\mathrm{log}(n+L)}.
\end{align*}
$|\sum_{l\in[L]}D_{l}(i,i)-\sum_{l\in[L]}\sum_{j\in[n]}\Omega_{l}(i,j)|\leq \frac{2+2\sqrt{10}}{3}\sqrt{\rho nL\mathrm{log}(n+L)}$ for $i\in[n]$ gives $\sum_{l\in[L]}D_{l}(i,i)\leq \sum_{l\in[L]}\sum_{j\in[n]}\Omega_{l}(i,j)+\frac{2+2\sqrt{10}}{3}\sqrt{\rho nL\mathrm{log}(n+L)}\leq \rho nL+\frac{2+2\sqrt{10}}{3}\sqrt{\rho nL\mathrm{log}(n+L)}=O(\rho nL)$ since we require $\rho nL\geq\mathrm{log}(n+L)$ to hold. Therefore, we have $\mathrm{max}_{i\in[n]}\sum_{l\in[L]}D_{l}(i,i)=O(\rho nL)$ and this lemma holds.
\end{proof}
Follow a similar proof as Theorem \ref{mainSPDSoS}, for the SPSoS method, we have
\begin{align*}
\mathrm{max}_{i\in[n]}\|e'_{i}(\hat{\Pi}-\Pi\mathcal{P})\|_{1}=O(\frac{\|\sum_{l\in[L]}A^{2}_{l}-\sum_{l\in[L]}\Omega^{2}_{l}\|_{\infty}}{\rho^{2}n^{2}L}).
\end{align*}
By Lemma \ref{boundSum2Infinity} and the fact that $\rho\leq1$, this theorem holds.
\end{proof}
\bibliographystyle{model5-names}\biboptions{authoryear}
\bibliography{refMLMMSB}
\end{document}